\date{}
\newtheorem{obs}{Observation}
\title{\emph{2-}Trees: Structural Insights and the study of Hamiltonian Paths}
\author{ P.Renjith and N.Sadagopan } 
\institute{Department of Computer Engineering,\\ Indian Institute of Information Technology, Design and Manufacturing, Kancheepuram, Chennai, India. \\
\email{$\{coe14d002,sadagopan\}@iiitdm.ac.in$}}
\begin{document}
\maketitle
%%%%%%%%%%%%%%%%%%%%%%%%%%%%%%%%%%%%%%%%%%%%%%%%%%%%%%%%%%%%%%%%%%%%
\begin{abstract}
\noindent
For a connected graph, a path containing all vertices is known as \emph{Hamiltonian path}.  For general graphs, there is no known necessary and sufficient condition for the existence of Hamiltonian paths and the complexity of finding a Hamiltonian path in general graphs is NP-Complete.  We present a necessary and sufficient condition for the existence of Hamiltonian paths in 2-trees.  Using our characterization, we also present a linear-time algorithm for the existence of Hamiltonian paths in 2-trees.  Our characterization is based on a deep understanding of the structure of 2-trees and the combinatorics presented here may be used in other combinatorial problems restricted to 2-trees. 
\\
\end{abstract}
\section{Introduction}
Hamiltonian path (cycle) problem is one of the most extensively studied problem, that looks for a spanning path (cycle) in a connected graph.  Interestingly, such a problem has many applications in real life, related to  medical genetic studies\cite{app1}, for chromosome studies, in physics \cite{app2} and operational research \cite{app3}. Hamiltonian problem is one among the NP-complete problems in general graphs\cite{karp}.  
For a graph, the fundamental research question is to find a necessary and sufficient condition for the existence of a Hamiltonian path (Hamiltonian cycle).  Surprisingly, there is no known necessary and sufficient condition despite many attempts from several researchers \cite{dirac,newman,ore,nash,chvatal,Gould,ainouche,Faudree,chvatalerdos,bondychvatal}.  However, there are well-known {\em necessary conditions } and {\em sufficient conditions}.  
Necessary condition by V.Chvatal \cite{west} states that if a connected graph $G$ has a Hamiltonian cycle, then for each non-empty subset $S\subset V(G)$, the graph $G-S$ has at most  $|S|$ components.
Sufficient condition looks for structural conditions for a graph to have a Hamiltonian cycle, mostly
for the presence of higher degree vertices in a graph.  Sufficient conditions based on vertex degree has been proposed in the literature \cite{dirac,newman,ore,nash,chvatal,Gould,ainouche,Faudree}. Other sufficient conditions based on graph closure, independence number and connectivity have also been formulated \cite{chvatalerdos,bondychvatal}.\\\\
Interestingly, several {\em variants} of Hamiltonian path (Hamiltonian cycle) have been looked at in the past by imposing appropriate constraints.  A graph is said to be homogeneously traceable, if there exist a Hamiltonian path beginning at every vertex of $G$.  A hypo-Hamiltonian graph is a non-Hamiltonian graph $G$ such that $G-v$ is Hamiltonian for every vertex, $ v \in V(G)$.  Existence of homogeneously traceable graph and hypo-Hamiltonian graph \cite{herz,lindergen,chatrand} were studied in the literature.  A graph is $k-$ordered Hamiltonian if for every ordered sequence of $k$ vertices, there exists a Hamiltonian cycle that encounters the vertices of the sequence in the given order.  If there exist a Hamiltonian path between every pair of vertices then the graph is called Hamiltonian connected.  A pancyclic graph on $n$ vertices is a graph which has every cycle of length $l, 3 \leq l \leq n$.   Sufficient conditions for the existence of  $k-$ordered Hamiltonian, Hamiltonian connected, and pancyclic graphs, similar to Ore's and Dirac's results have also been proposed in the literature \cite{schultz,keirstead,faudree1,faudree2}.\\\\
On the algorithmic front, it is well-known that Hamiltonian path (Hamiltonian cycle) is NP-complete.  When a combinatorial problem is NP-complete in general graphs, it is natural to study the complexity on restricted graph classes or special graph classes.  The popular graph classes studied in the literature are chordal, interval, grid, chordal bipartite, distance hereditary,  circular arc, cubic, and planar.  It is proved  that Hamiltonian problem is NP-complete on  various restricted graph classes like chordal \cite{bertossi}, grid \cite{Gordon},  chordal-bipartite \cite{muller}, planar \cite{tarjanplanar}, bipartite \cite{akiyama}, directed path graph \cite{giri} and rooted directed path graph \cite{pandapradhan}.  On the other side, nice polynomial-time algorithms for the same has been found on interval \cite{keil,hungint}, circular arc \cite{hungcir,shih}, proper interval \cite{panda,ibarra}, distance hereditary\cite{hungdis}, and specific sub class of grid graphs \cite{collins}.  Nice structural characterization for the existence of Hamiltonian cycle in claw $(K_{1,3})$-free graphs \cite{goodman,gould1,gould3,boersma,bedrossian,gould2,Ryjacek} has been studied in the past as well.  A detailed survey on the Hamiltonian properties has been compiled by Broersma and Gould \cite{s1,s2,s3}.  \\\\
  \emph{Chordal graphs} are one among the restricted graph classes possessing nice structural characteristics.  A graph is said to be chordal if every cycle of length more than three has a chord.  A chord is an edge joining  two non-consecutive vertices of a cycle.  Given that chordal graphs have polynomial-time algorithm on various classical combinatorial problems such as vertex cover, clique, it is natural to investigate the complexity of Hamiltonian problems on chordal graphs.  As already mentioned, Hamiltonian cycle problem on chordal graphs is NP-complete, this brings our focus on some subclasses of chordal graphs.  Interestingly, interval graphs, a quite popular subclass of chordal graphs have a polynomial-time algorithm for Hamiltonian problem.  Similarly, other special graph classes like proper-interval graphs and circular arc graphs also possess polynomial-time algorithms.  To the best of our knowledge, these are the only polynomial-time results for Hamiltonian cycle problem on the sub class of chordal graphs. \\\\
 The objective of this paper is two fold.  First, we present structural insights on 2-trees.  Further, we present a necessary and sufficient condition for the existence of Hamiltonian paths, and using the characterization, a polynomial-time algorithm to obtain Hamiltonian paths in 2-trees is also presented.\\ 
\\ \textbf{Our Approach:} Given a 2-tree $G$, we perform a series of computations to obtain a Hamiltonian path.  We first check whether $G$ is $3$-pyramid free.  If so, we output a Hamiltonian path.  We next check whether $G$ is $4$-pyramid free and contains exactly one $3$-pyramid.  If so, $G$ contains a Hamiltonian path.  If $G$ is $4$-pyramid free and contains at least two $3$-pyramids, then we first perform a pruning of the 2-tree by removing 2-degree vertices iteratively satisfying some structural condition.  During pruning, we also color the edges, in particular if an edge $e$ in $G$ is colored blue during pruning, it indicates that there is a $3$-pyramid free sub 2-tree with $e$ as the base 2-tree.  We also observe that the first level pruning yields a $3$-pyramid free 2-tree with some edges are colored blue.  On this pruned 2-tree we identify five sets of edges (non-blue edges) which will be removed from $G$.  The existence of Hamiltonian path in $G$ is determined based on some structural conditions on this simplified graph.  We also highlight that each pruning step is a solution preserving step and indeed guarantees a Hamiltonian path.
\\ \textbf{Road Map:} We next present graph preliminaries.  In Section 2, we present a necessary and sufficient condition for a 2-tree to have Hamiltonian paths and Hamiltonian cycles.  The algorithm for finding a Hamiltonian path in a 2-tree is presented in Section 2.4.  
\subsection{Graph Preliminaries}
Notation is as per \cite{bondy}. In this paper we work with simple, connected, unweighted graphs.  
For a graph $G$ the vertex set is $V(G)$ and the edge set is $E(G)=\{uv:u,v \in V(G)$ and $u$ is adjacent to $v$ in $G$ and $u\neq v\}$.  
The neighborhood of vertex $v$ is $N_{G}(v)  = \{u:uv\in E(G)\}$.  
The degree of a vertex $v$ is $d_{G}(v) = |N_{G}(v)|$.   
$\Delta(G)$ denotes the maximum degree in $G$.  
For a vertex $u$, \emph{close(u)}$ = \{vw:uv, uw\in E(G)\}$.  
For an edge $e=uv$, \emph{close(e) }$=\{w:wu, wv\in E(G)\}$.
A \emph{2-}tree G can be inductively constructed as follows.  
An edge is a \emph{2-}tree.  If G is a \emph{2-}tree on $(n-1), n\geq 3$ vertices, then select an edge $uv\in E(G)$ and add a vertex $z$ to $G$ such that $N_{G^{'}}(z) = \{u,v\}$; $uz, vz\in E(G^{'})$ is also a \emph{2-}tree $G^{'}$ on $n$ vertices.
We call a  \emph{2-}tree $n$-$pyramid$, $n\geq 2$ if it has $n+2$ vertices and an edge $\{u,v\}$ such that $|N_{G}(u) \cap N_{G}(v)| = n$.  A $5$-pyramid is shown in Figure \ref{fig1}.  We call a 2-tree $G$, $n$-pyramid free if $G$ contains no $n$-pyramid as an induced subgraph.\\
$K_{n}$ denotes a complete graph on $n$ vertices.  
A vertex $v\in V(G)$ is called a \emph{simplicial vertex} if $N_{G}(v)$ induces a complete subgraph of $G$ \cite{golumbic}.  
\emph{Perfect vertex Elimination Ordering} (\emph{PEO}) is an ordering of the vertices of a graph as ($v_{1},v_{2},\ldots,v_{n}$) such that each $v_{i}$ is a simplicial vertex of the induced subgraph on vertices $\{v_{i},v_{i+1},\ldots,v_{n}\}$.  Note that by definition \emph{2-}trees are chordal. 
\begin{figure}[h!]
\begin{center}
	\includegraphics[scale=0.25]{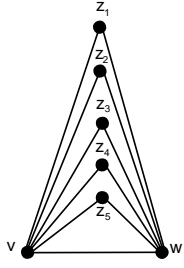} 
\caption{$5$-pyramid $2$-tree}
\label{fig1}
\end{center}
\end{figure}
An \emph{(s,t)-Hamiltonian path} is a Hamiltonian path from $s$ to $t$.  If $S\subset V(G)$, then the induced subgraph $G[V(G)\backslash S]$ is represented as $G-S$.  For $G-\{v\}$, we also use $G-v$.  $G$ is $H$-free if $G$ does not contain $H$ as an induced subgraph.  An \emph{edge-induced} subgraph $H$ of $G$ is formed on the edge set $E(H)\subset E(G)$ and $V(H)=\{u:$edge $uv\in E(H)$ is incident on the vertex $u\}$. 
$c(G)$ denotes the number of connected components in the graph $G$.  
For a connected graph $G$, $c(G)$=$1$.  $S\subset V(G)$ is a \emph{vertex separator} if $c(G)<c(G-S)$.  
A cut vertex $v \in V(G)$ is a vertex such that $c(G)<c(G-v)$.  $C_v$ represents the component of a disconnected graph containing a vertex $v$.
A $2$-connected component is a component without a cut vertex.  
A \emph{block} is a maximal 2-connected component of a graph.  
Path from vertex $u$ to $v$, $P_{uv} $ is represented as $(u, u_{1}, u_{2}, \ldots, u_{k}, v), k\geq 0$, where vertices $u_{i}, 1\leq i\leq k$ are termed as internal vertices of $P_{uv}$.  We use $P_{uv}$ to represent $V(P_{uv})$ and hence $|V(P_{uv})|=|P_{uv}|$.   \emph{Blue path} is a path with all its edges blue.  
\section{Structural insights into 2-trees}
In this section we shall present some insights into the structure of 2-trees.
Below observation is a well-known characteristics of any 2-tree.
 \begin{obs} \label{obs1}
 Let $G$ be a 2-tree.  $G$ forbids $K_{n\ge4}$, and $C_{n\ge4}$ as an induced subgraph.
 \end{obs}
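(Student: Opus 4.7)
The plan is to handle the two forbidden subgraphs separately, since they yield to very different tools. For $C_{n \geq 4}$ the argument is essentially free: the preliminaries already observe that every 2-tree is chordal, and chordality is by definition the statement that no induced cycle of length at least four exists. Nothing more needs to be said for this half.

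The substantive half is $K_{n \geq 4}$, which I would prove by induction on $|V(G)|$, leaning on the recursive construction given in the preliminaries. For the base cases, any 2-tree on at most $3$ vertices is clearly too small to contain a $K_4$; and a 2-tree on exactly $4$ vertices is obtained from $K_3$ by attaching one new vertex to a single edge, giving $2\cdot 4 - 3 = 5$ edges, strictly fewer than the $6$ edges of $K_4$, so it cannot be $K_4$. For the inductive step, suppose $G$ is a 2-tree on $n \geq 5$ vertices and write $G = G' + z$, where $G'$ is a 2-tree on $n-1$ vertices and $z$ is the most recently added vertex with $N_G(z) = \{u,v\}$, so $d_G(z) = 2$. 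Assume for contradiction that $G[S] \cong K_m$ for some $S \subseteq V(G)$ with $m = |S| \geq 4$. If $z \notin S$, then $G - z = G'$ already contains $G[S]$ as an induced subgraph, contradicting the inductive hypothesis applied to the smaller 2-tree $G'$. If $z \in S$, then $z$ must be adjacent in $G$ to the other $m - 1 \geq 3$ vertices of $S$, forcing $d_G(z) \geq 3$ and contradicting $d_G(z) = 2$.

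The only step that requires any care is the appeal to the recursive structure: one must invoke the inductive definition of a 2-tree to produce a vertex $z$ of degree exactly $2$ whose deletion leaves a smaller 2-tree. With that in hand the contradiction is immediate in both cases, so I do not anticipate any serious obstacle in writing out the full proof.
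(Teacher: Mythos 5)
Your proposal is correct. Note, however, that the paper offers no proof of this observation at all: it is introduced with the remark that it is ``a well-known characteristics of any 2-tree'' and stated without argument, so there is no proof of record to compare against. Your two-part argument is exactly the natural way to fill this gap, and both halves are consistent with the paper's framework: the $C_{n\ge4}$ half follows from the preliminaries' assertion that 2-trees are chordal (chordality being precisely the absence of induced cycles of length at least four), and the $K_{n\ge4}$ half is a clean induction on the recursive construction, where the key point --- that the last-added vertex $z$ has $d_G(z)=2$ and hence cannot lie in any induced $K_{m\ge4}$, while $G-z$ is a smaller 2-tree handling the case $z\notin S$ --- is sound. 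One cosmetic remark: your separate base case at $|V(G)|=4$ is unnecessary, since the inductive step already disposes of $n=4$ (if $z\in S$ then $d_G(z)\ge3$, and if $z\notin S$ then $S$ would need four vertices inside a 3-vertex graph); the edge count $2n-3<\binom{n}{2}$ is correct but not needed. No gaps.
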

\begin{lemma}\label{lem1}
Let $G$ be a 2-tree and $uv\in E(G)$.  If $|N_{G}(u)\cap N_{G}(v)| = n$, then $c(G-\{u,v\}) = n$.
\end{lemma}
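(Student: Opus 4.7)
\bigskip

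\noindent\textbf{Proof proposal.} The plan is to induct on $|V(G)|$, exploiting the inductive structure of 2-trees: every 2-tree with $|V(G)| \ge 3$ contains at least two simplicial vertices, and removing a simplicial vertex yields a smaller 2-tree. The base cases $|V(G)| \in \{2,3\}$ are immediate: either $G$ is the edge $uv$ (so $n=0$ and $G-\{u,v\}$ is empty) or $G = K_3$ (so $n=1$ and $G-\{u,v\}$ is a single vertex).

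For the inductive step with $|V(G)| \ge 4$, I would first argue that one can choose a simplicial vertex $s \in V(G) \setminus \{u,v\}$. This is where I expect one small subtle point: we need the standard fact that a 2-tree on at least three vertices has at least two simplicial vertices, so at most one of them can coincide with $u$ or $v$; together with Observation~\ref{obs1} (which forbids $K_4$), this forces $|N_G(s)|=2$, say $N_G(s) = \{a,b\}$ with $ab \in E(G)$. Setting $G' = G - s$ gives a 2-tree on one fewer vertex that still contains $uv$, and by the inductive hypothesis $c(G' - \{u,v\}) = |N_{G'}(u) \cap N_{G'}(v)|$.

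The remaining work is a case analysis on the position of $\{a,b\}$ relative to $\{u,v\}$, comparing $c(G - \{u,v\})$ with $c(G' - \{u,v\})$ as we reinsert $s$:
\begin{enumerate}
\item If $\{a,b\} = \{u,v\}$, then $s$ is a new common neighbour of $u,v$, so $n = n'+1$; in $G-\{u,v\}$ the vertex $s$ becomes isolated, adding exactly one component.
\item If $\{a,b\} \cap \{u,v\} = \emptyset$, then $n = n'$; reinserting $s$ attaches it via $a,b$, which already lie in the same component of $G'-\{u,v\}$ by virtue of the edge $ab$, so the count is unchanged.
\item If exactly one of $a,b$ lies in $\{u,v\}$, then $s$ is not a common neighbour of $u,v$, so $n=n'$; in $G-\{u,v\}$ the vertex $s$ has a single surviving neighbour and thus only joins an existing component.
\end{enumerate}
In every case $c(G - \{u,v\}) = n$, completing the induction.

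The main obstacle I anticipate is not the arithmetic of the case split, which is mechanical once framed correctly, but the clean justification that we can always pick a simplicial vertex outside $\{u,v\}$. Without that choice, the induction has to handle the degenerate possibility that the chosen simplicial vertex is itself $u$ or $v$, which would force a separate argument (for instance, via the fact from the proof of the base case that a simplicial $u$ with $v \in N_G(u)$ forces $n \le 1$ by the $K_4$-freeness of Observation~\ref{obs1}). Invoking the two-simplicial-vertex property of 2-trees up front keeps the argument short and localises the case analysis to the three configurations above.
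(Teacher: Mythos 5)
Your proposal is correct and follows essentially the same route as the paper: both arguments induct on $|V(G)|$ by deleting a simplicial vertex and checking that reinsertion changes $c(G-\{u,v\})$ and $|N_G(u)\cap N_G(v)|$ in lockstep. The only structural difference is quantification: the paper proves the identity for \emph{all} edges of $G$ simultaneously, so it may delete an arbitrary simplicial vertex $w$ and treat its base edge $xy$ as the single special case (both quantities grow by one there), whereas you fix the edge $uv$ and must therefore choose the simplicial vertex outside $\{u,v\}$; your three-case split on $\{a,b\}$ versus $\{u,v\}$ is exactly the paper's dichotomy, slightly refined.

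One repair is needed at precisely the point you flagged. The inference ``a 2-tree on at least three vertices has at least two simplicial vertices, so at most one of them can coincide with $u$ or $v$'' is a non sequitur as stated: if the graph had exactly two simplicial vertices, nothing in that fact prevents them from being $u$ and $v$ themselves. Two standard patches work. (a) Invoke Dirac's lemma that a non-complete chordal graph has two \emph{non-adjacent} simplicial vertices; since $uv\in E(G)$, at most one of that pair lies in $\{u,v\}$, and a 2-tree on at least four vertices is not complete by Observation~\ref{obs1}. (b) Argue directly that adjacent vertices of a 2-tree on at least four vertices cannot both be simplicial: a simplicial vertex there has degree exactly $2$ (note that for your claim $|N_G(s)|=2$ you also need the minimum-degree-$2$ property of 2-trees, not only $K_4$-freeness), and deleting a simplicial $u$ leaves a 2-tree on at least three vertices in which $v$ would have degree $1$, which is impossible. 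With either patch your induction goes through exactly as you describe, and the case analysis itself is sound.
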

\begin{proof}
We use induction on $n=|V(G)|$.  The claim is immediate for $n\le 2$.  For $n\ge3$, let $w$ be a simplicial vertex in $G_n$
such that $N_{G_n}(w)=\{x,y\}$.  From the induction hypothesis, in $G_{n-1}=G_n-\{w\}$, for every $uv\in E(G_{n-1})$, $c(G_{n-1}-\{u,v\}) = |N_{G_{n-1}}(u)\cap N_{G_{n-1}}(v)|$.  Clearly, for every $uv\in E(G_n)\setminus\{xy\}$, $c(G_{n}-\{u,v\})=c(G_{n-1}-\{u,v\})=|N_{G_{n}}(u)\cap N_{G_{n}}(v)|$, and $c(G_{n}-\{x,y\})=c(G_{n-1}-\{x,y\})+1=|N_{G_{n-1}}(x)\cap N_{G_{n-1}}(y)|+1=|N_{G_{n}}(x)\cap N_{G_{n}}(y)|$.  This completes the induction. $\hfill \qed$ 
\end{proof} 
%%%%%%%%%%%%%%%%%%%%%%%%      THEOREM 1    %%%%%%%%%%%%%%%%%%%%%%%%%%%%%
\begin{theorem}[Chvatal \cite{west}] \label{thmchvatal}
If a graph $G$ has a Hamiltonian cycle, then for every $S \subset V(G)$, $c(G-S) \leq |S|$.
\end{theorem}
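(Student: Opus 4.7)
The plan is to use the Hamiltonian cycle itself as the certificate that bounds the number of components after removing $S$. Let $C$ be a Hamiltonian cycle of $G$, viewed as a spanning cyclic subgraph of $G$ on vertex set $V(G)$. Fix any non-empty $S \subset V(G)$ with $|S|=k$.

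First I would argue about $C-S$ alone. Since $C$ is a simple cycle on $n$ vertices and we delete $k$ of them, the resulting graph $C-S$ is a disjoint union of paths (some possibly trivial, i.e.\ single vertices). Walking around $C$ and marking each vertex as ``in $S$'' or ``not in $S$'', the non-$S$ vertices split into maximal cyclically consecutive runs, and the number of such runs equals the number of maximal runs of $S$-vertices, which is at most $k$. Hence $c(C-S) \le k = |S|$.

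Next I would transfer this bound from $C-S$ to $G-S$. The graph $C-S$ is a spanning subgraph of $G-S$ (same vertex set, edge set contained in $E(G-S)$), so adding edges can only merge components, never create new ones. Therefore $c(G-S) \le c(C-S) \le |S|$, which is the desired inequality.

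I do not anticipate any genuine obstacle here; the proof is essentially a counting argument on the cycle. The only mild care needed is the edge case where $S$ consists of cyclically consecutive vertices on $C$, in which case $C-S$ is a single path and $c(C-S)=1 \le |S|$, and the trivial case $S = V(G)$ is excluded by the strict inclusion $S \subset V(G)$ (though the bound would still hold vacuously). No further machinery from the preceding sections is required.
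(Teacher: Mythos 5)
Your argument is correct and complete: deleting the $k=|S|$ vertices of $S$ from the spanning cycle $C$ leaves a disjoint union of paths, one per maximal cyclic run of non-$S$ vertices, and these runs alternate around $C$ with maximal runs of $S$-vertices, of which there are at most $k$; since $C-S$ is a spanning subgraph of $G-S$, restoring the remaining edges of $G$ can only merge components, so $c(G-S)\le c(C-S)\le |S|$. There is, however, no in-paper proof to compare against: the paper states this result as a known theorem of Chv\'atal and cites West's textbook for it, offering no argument of its own. The closest thing the paper does prove is the Hamiltonian-path analogue, Lemma \ref{lem2} (bound $|S|+1$), and there it argues by contradiction in the opposite direction: a spanning path meeting at least $|S|+2$ components of $G-S$ must switch components at least $|S|+1$ times, each switch consuming a distinct vertex of $S$. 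Your spanning-subgraph counting is the direct, non-contradiction form of the same count, and it transfers verbatim to that setting --- deleting $k$ vertices from a spanning path leaves at most $k+1$ subpaths, giving $c(G-S)\le |S|+1$ --- so your method subsumes the paper's Lemma \ref{lem2} as well. One minor point in your favour: as literally quantified (``for every $S\subset V(G)$'') the statement fails for $S=\emptyset$, since $c(G)=1>0$; your explicit restriction to non-empty $S$ is the correct reading, and your handling of the boundary cases (consecutive $S$ on the cycle, exclusion of $S=V(G)$ by strict inclusion) is sound.
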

Theorem \ref{thmchvatal} is a well-known necessary condition for Hamiltonicity in general graphs.  
Also there is no necessary and sufficient condition for hamiltonicity in general graphs.
In Theorem \ref{thm1}, we present a necessary and sufficient condition for the existence of Hamiltonian cycles in 2-trees.  
Further, we show that Theorem \ref{thmchvatal} is indeed sufficient for 2-trees, which we establish using Theorem $\ref{thm1}$, and Theorem $\ref{thm2}$.  
\begin{obs}
For every $k\geq 4$, any $3$-pyramid free 2-tree is also a $k$-pyramid free 2-tree. 
\end{obs}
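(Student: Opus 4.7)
The plan is to prove the observation by contrapositive: if a 2-tree $G$ contains an induced $k$-pyramid for some $k \geq 4$, then $G$ contains an induced $3$-pyramid. Let $H$ be the $k$-pyramid, realized on vertex set $\{u,v,w_1,\ldots,w_k\}$, where $uv \in E(G)$ and each $w_i$ is adjacent in $G$ to both $u$ and $v$. I would pick any three of the common neighbors, say $w_1,w_2,w_3$, and argue that $S=\{u,v,w_1,w_2,w_3\}$ induces a $3$-pyramid in $G$.

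The first step is to observe that the $w_i$'s are pairwise non-adjacent. This follows from Observation \ref{obs1}: if $w_iw_j$ were an edge, then $\{u,v,w_i,w_j\}$ would induce a $K_4$ in the 2-tree $G$, which is forbidden. Equivalently, this falls out directly from the inductive construction of the $k$-pyramid, in which each $w_i$ is introduced as a simplicial vertex whose neighborhood is exactly $\{u,v\}$.

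Next, since $H = G[\{u,v,w_1,\ldots,w_k\}]$ is an induced subgraph, the restriction $G[S]$ inherits exactly the adjacencies of $H$ on $S$: the edge $uv$, the four edges $uw_i, vw_i$ for $i \in \{1,2,3\}$, and no others. Hence $|N_{G[S]}(u) \cap N_{G[S]}(v)| = 3$, which means $G[S]$ is a $3$-pyramid, as required. Contrapositively, a $3$-pyramid free 2-tree is $k$-pyramid free for every $k \geq 4$.

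There is no real obstacle here; the observation is essentially a direct unpacking of the definition of an $n$-pyramid together with the $K_4$-freeness of 2-trees. The only subtle point to flag carefully is the difference between $H$ being a subgraph and being an \emph{induced} subgraph of $G$, so that restricting to a subset of $V(H)$ preserves both the presence of the claimed edges and the absence of the forbidden ones.
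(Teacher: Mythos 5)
Your proof is correct: the paper states this observation without proof (treating it as immediate from the definitions), and your contrapositive argument --- extracting three common neighbors $w_1,w_2,w_3$ of the base edge $uv$ from an induced $k$-pyramid and checking, via the induced-subgraph property and the $K_4$-freeness of 2-trees, that $\{u,v,w_1,w_2,w_3\}$ induces a $3$-pyramid --- is exactly the natural formalization of the intended reasoning. You were also right to flag the induced-versus-ordinary subgraph distinction, since pyramid-freeness in the paper is defined in terms of induced subgraphs and your restriction to $S$ relies on it.
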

\begin{theorem}\label{thm1}
Let G be a 2-tree.  G has a Hamiltonian cycle if and only if $G$ is $3$-pyramid free.
\end{theorem}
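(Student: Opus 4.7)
My plan is to handle the two directions separately. Necessity will follow by combining Lemma~\ref{lem1} with Theorem~\ref{thmchvatal}: if $G$ contained a $3$-pyramid, some edge $uv$ would satisfy $|N_G(u)\cap N_G(v)|=3$, so by Lemma~\ref{lem1} we would have $c(G-\{u,v\})=3>2=|\{u,v\}|$, contradicting the Chvatal necessary condition. Hence any Hamiltonian $2$-tree is automatically $3$-pyramid free.

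For sufficiency I would strengthen the inductive claim to the following: in every $3$-pyramid free $2$-tree $G$ on $n\ge 3$ vertices, the set of edges of $G$ that lie in exactly one triangle (call these \emph{boundary edges}) is itself a Hamiltonian cycle. The base case $n=3$ is just $K_3$, whose three edges each lie in its unique triangle and form a $3$-cycle. For $n\ge 4$ I would pick any simplicial vertex $w$ with $N_G(w)=\{u,v\}$ and set $G':=G-w$, which is still a $3$-pyramid free $2$-tree, now on $n-1\ge 3$ vertices. By the inductive hypothesis the boundary edges of $G'$ form a Hamiltonian cycle $C'$. The key bookkeeping step is to verify that the boundary edges of $G$ are precisely $(C'\setminus\{uv\})\cup\{uw,vw\}$: the edge $uv$ acquires a second triangle $uvw$ and so leaves the boundary, while $uw$ and $vw$ each lie in only the new triangle $uvw$ and so join it. Provided $uv\in C'$, replacing the edge $uv$ inside $C'$ by the path $u$-$w$-$v$ produces the desired Hamiltonian cycle of $G$, closing the induction.

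The main obstacle is justifying the quantitative claim $|N_G(u)\cap N_G(v)|=2$, which is what guarantees that $uv$ is a boundary edge of $G'$ and therefore lies in $C'$. The upper bound $\le 2$ is exactly the $3$-pyramid free assumption. For the lower bound I would first prove, by a short side induction on the $2$-tree construction, that every edge of a $2$-tree on at least three vertices lies in at least one triangle; applied to the edge $uv$ inside $G'$ this gives $|N_{G'}(u)\cap N_{G'}(v)|\ge 1$, and adding $w$ back yields $|N_G(u)\cap N_G(v)|\ge 2$. Once this is pinned down, the remainder of the inductive step reduces to the routine edge-bookkeeping and the local cycle edit described above.
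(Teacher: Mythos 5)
Your proof is correct and follows essentially the same route as the paper: the necessity argument (Lemma~\ref{lem1} combined with Theorem~\ref{thmchvatal}) is identical, and your ``boundary edges'' are exactly the paper's edge set $E'=\{uv : |N_{G}(u)\cap N_{G}(v)|=1\}$, which the paper simply asserts forms the Hamiltonian cycle. Your induction on removing a simplicial vertex $w$ --- in particular the verification that $|N_G(u)\cap N_G(v)|=2$, so that $uv$ lies on the inherited cycle $C'$ and can be replaced by the path $u$--$w$--$v$ --- rigorously supplies the details that the paper's one-line sufficiency argument leaves implicit.
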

\begin{proof} \emph{Necessity:}
Assume for a contradiction that $G$ has a $3$-pyramid.  This implies there exist $uv \in E(G)$ such that $|N_{G}(u) \cap N_{G}(v)| = 3$.  By Lemma \ref{lem1}, $c(G-\{u,v\}) =3$.  Further, by Theorem \ref{thmchvatal}, $G$ has no Hamiltonian cycle, a contradiction to the premise.\\
\emph{Sufficiency:}
For any $3$-pyramid free 2-tree $G$ on more than two vertices, the unique Hamiltonian cycle of $G$ is obtained by using the edge set  $E^{'}=\{uv : |N_{G}(u) \cap N_{G}(v)|=1\}$.  $\hfill \qed$ \\
\end{proof}
\begin{theorem}\label{thm2}
Let G be a 2-tree.  For every $S\subset V(G)$, $c(G-S)\leq |S|$ if and 
only if G is $3$-pyramid free.
\end{theorem}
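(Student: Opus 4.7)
The plan is to prove both directions of the equivalence by leveraging results already established in the excerpt, so the proof reduces to two short arguments.

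For the direction "separator condition implies $3$-pyramid free", I would argue by contrapositive. Suppose $G$ contains a $3$-pyramid $H$ as an induced subgraph. By definition of a $3$-pyramid, $H$ contains an edge $uv$ with $|N_{H}(u) \cap N_{H}(v)| = 3$. Since $H$ is an induced subgraph of $G$, these three common neighbors of $u$ and $v$ in $H$ are also common neighbors in $G$, so $|N_{G}(u) \cap N_{G}(v)| \geq 3$. Applying Lemma \ref{lem1} with $S = \{u,v\}$ yields $c(G - S) = |N_{G}(u) \cap N_{G}(v)| \geq 3 > 2 = |S|$, violating the hypothesis. Hence no $3$-pyramid can exist in $G$.

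For the direction "$3$-pyramid free implies separator condition", the argument is a one-line invocation of Theorem \ref{thm1} followed by Theorem \ref{thmchvatal}. If $G$ is $3$-pyramid free and $|V(G)| \geq 3$, Theorem \ref{thm1} furnishes a Hamiltonian cycle in $G$; Theorem \ref{thmchvatal} then gives $c(G-S) \leq |S|$ for every $S \subset V(G)$. The degenerate cases $|V(G)| \in \{1,2\}$ are immediate since $G-S$ has at most one component whenever $S$ is a proper subset.

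I do not anticipate any real obstacle in this proof: both directions are essentially corollaries of previously proved statements. The one place that requires a moment of care is the step showing that a $3$-pyramid as an induced subgraph forces $|N_{G}(u) \cap N_{G}(v)| \geq 3$ for its base edge $uv$, since one must confirm that common neighbors inside the induced subgraph remain common neighbors in the ambient graph $G$; this follows directly from the definition of induced subgraph. The deeper structural content has already been absorbed into Lemma \ref{lem1} and Theorem \ref{thm1}, which is precisely why Theorem \ref{thm2} can be regarded as the observation that Chvatal's classical necessary condition becomes sufficient within the class of 2-trees.
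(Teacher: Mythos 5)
Your proof is correct and follows essentially the same route as the paper: the forward direction is the paper's necessity argument (a $3$-pyramid forces $|N_G(u)\cap N_G(v)|\ge 3$ for its base edge, so Lemma \ref{lem1} gives $c(G-\{u,v\})\ge 3 > |\{u,v\}|$), and the converse is the paper's one-line deduction from Theorem \ref{thm1} together with Theorem \ref{thmchvatal}. Your two additions --- checking that common neighbors in the induced $3$-pyramid persist in $G$, and handling $|V(G)|\le 2$ separately (needed because Theorem \ref{thm1}'s Hamiltonian cycle construction assumes more than two vertices) --- are sound refinements of details the paper leaves implicit.
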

\begin{proof}
\emph{Necessity:}  Assume for a contradiction that $G$ has a $3$-pyramid.  This implies there exist $uv \in E(G)$ such that $|N_{G}(u) \cap N_{G}(v)|\ge3$.  By Lemma \ref{lem1}, $c(G-\{u,v\})\ge3$, a contradiction to the premise. \\
\emph{Sufficiency:} follows from Theorem \ref{thmchvatal} and \ref{thm1}. $\hfill \qed$
\end{proof}
\begin{corollary}
For a 2-tree $G$,  $G$ has a Hamiltonian cycle if and only if for every 
$S\subseteq V(G)$, $c(G-S)\leq |S|$.  
\end{corollary}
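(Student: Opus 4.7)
The plan is to obtain this corollary as an immediate consequence of Theorems \ref{thm1} and \ref{thm2} by chaining their equivalences through the common pivot property of being $3$-pyramid free. Theorem \ref{thm1} equates the existence of a Hamiltonian cycle with $3$-pyramid freeness, while Theorem \ref{thm2} equates the Chv\'atal-style condition $c(G-S)\le|S|$ (for proper subsets $S\subset V(G)$) with the same structural property. Composing these two biconditionals yields the required equivalence for all proper subsets of $V(G)$.

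The only subtlety I foresee is that the corollary is stated for $S\subseteq V(G)$, whereas Theorem \ref{thm2} is stated for $S\subset V(G)$. I would handle the remaining case $S=V(G)$ separately: since $G-V(G)$ is empty, one has $c(G-S)=0\le|V(G)|=|S|$ unconditionally, so including the full-set case adds no new constraint and preserves the biconditional. Thus the extension from proper subsets to all subsets is free.

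I do not anticipate any substantive obstacle, since all the heavy lifting has already been carried out: Lemma \ref{lem1} provides the bridge between a $3$-pyramid in $G$ and a $2$-vertex separator whose removal yields three components, Theorem \ref{thm1} supplies the structural characterization of Hamiltonicity via $3$-pyramid freeness, and Theorem \ref{thm2} supplies the same characterization for the Chv\'atal condition. The corollary is therefore essentially a one-line consequence, and I would present it as such, reducing it explicitly to the two preceding theorems rather than reproving anything.
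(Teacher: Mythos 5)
Your proposal is correct and matches the paper's own proof, which likewise derives the corollary immediately by composing Theorem \ref{thm1} and Theorem \ref{thm2} through the shared characterization of $3$-pyramid freeness. Your explicit handling of the $S=V(G)$ case (trivial since $c(G-V(G))=0\le|S|$) is a careful touch the paper silently omits, but it does not change the route.
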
 
Proof follows from Theorem \ref{thm1} and \ref{thm2}. $\hfill \qed$   \\
It is easy to see that graphs with Hamiltonian cycles contain  
Hamiltonian paths as well.  However, the converse is not true always.  Like 
Hamiltonian cycle problem there is no known necessary and sufficient 
condition for the existence of Hamiltonian paths in general graphs.  We 
below recall a necessary condition on graphs having Hamiltonian paths. 
\begin{lemma}\label{lem2}
Let $G$ be a connected graph.  If $G$ has a Hamiltonian path, then for every $S \subset V(G)$, $c(G-S) \leq |S| + 1$. 
\end{lemma}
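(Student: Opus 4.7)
The plan is to exploit the Hamiltonian path directly as a spanning subgraph that already encodes the connectivity needed. Since the path visits every vertex of $G$ exactly once, removing vertices from the path is easy to analyze: deleting $|S|$ vertices from a simple path of length $n-1$ chops it into at most $|S|+1$ subpaths.

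More concretely, I would let $P = (v_1, v_2, \ldots, v_n)$ be a Hamiltonian path of $G$ and fix an arbitrary nonempty $S \subset V(G)$ with $|S|=k$. Let $v_{i_1}, v_{i_2}, \ldots, v_{i_k}$ be the vertices of $S$ listed in the order they appear along $P$. Deleting these from $P$ produces the (possibly empty) subpaths $P_0 = (v_1,\ldots,v_{i_1-1})$, $P_j = (v_{i_j+1},\ldots,v_{i_{j+1}-1})$ for $1 \le j \le k-1$, and $P_k = (v_{i_k+1},\ldots,v_n)$, giving at most $k+1$ nonempty subpaths. Each such $P_j$ is, by construction, a connected subgraph of $G-S$.

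Next I would observe that the collection $\{V(P_j)\}$ partitions $V(G) \setminus S = V(G-S)$, because $P$ is a spanning path. Hence every vertex of $G-S$ lies in some connected subgraph $P_j$ of $G-S$, which means each component of $G-S$ contains at least one of the $V(P_j)$'s wholly (since components are unions of connected subgraphs touching them). Therefore the number of components of $G-S$ is at most the number of nonempty $P_j$, which is at most $|S|+1$, completing the argument.

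There is essentially no obstacle here; the only mild subtlety is the boundary bookkeeping when $v_{i_1}=v_1$ or $v_{i_k}=v_n$, in which case some $P_j$'s are empty and the bound $|S|+1$ is loose but still valid. No new machinery from the paper is needed, and the proof can be stated in just a few lines.
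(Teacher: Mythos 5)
Your proof is correct and rests on the same core fact as the paper's: the paper argues by contradiction that with $|S|+2$ components a Hamiltonian path would have to switch between components at least $|S|+1$ times, each switch consuming a distinct vertex of $S$, whereas you take the contrapositive-style direct route of deleting $S$ from the path to obtain at most $|S|+1$ connected segments, each lying in a single component of $G-S$ and together covering all of $V(G-S)$. These are two phrasings of one argument (your segment count is the dual of the paper's switch count), with yours spelling out the bookkeeping slightly more carefully; the only trivial omission is the empty-$S$ case, which holds since $G$ is connected.
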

\begin{proof} 
Suppose to the contrary assume that in $G-S$ there exist at least $|S|+2$ components. 
Any Hamilton path $P$ switches between different components at least $|S|+1$ times each time using a different element of  $S$, a contradiction.  $\hfill \qed$ 
\end{proof}
\begin{lemma}\label{lem3}
 Let $G$ be a 2-tree.  If $G$ contains a $4$-pyramid as an induced subgraph, then $G$ has no Hamiltonian path.
\end{lemma}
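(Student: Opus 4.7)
The plan is to combine the structural result in Lemma~\ref{lem1} with the necessary condition in Lemma~\ref{lem2} to get an immediate contradiction. First, I would unpack the hypothesis: if $G$ contains a $4$-pyramid as an induced subgraph, then by the definition of an $n$-pyramid there is an edge $uv\in E(G)$ together with four vertices $w_1,w_2,w_3,w_4$ that are all neighbors of both $u$ and $v$ (in the pyramid, hence also in $G$, since the pyramid is an \emph{induced} subgraph). Consequently $|N_G(u)\cap N_G(v)|\ge 4$.

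Next, I would apply Lemma~\ref{lem1} with the pair $u,v$ to conclude that
\[
c(G-\{u,v\}) \;=\; |N_G(u)\cap N_G(v)| \;\ge\; 4.
\]
Finally, I would invoke Lemma~\ref{lem2} in its contrapositive form: if $G$ had a Hamiltonian path, then for every $S\subset V(G)$ we would need $c(G-S)\le |S|+1$. Taking $S=\{u,v\}$ this bound becomes $c(G-\{u,v\})\le 3$, which contradicts the inequality $c(G-\{u,v\})\ge 4$ just established. Hence $G$ has no Hamiltonian path.

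There is no real obstacle here; the argument is a direct application of two results that have already been proved. The only point worth stating carefully is that Lemma~\ref{lem1} is phrased for the \emph{exact} common-neighborhood count, so if a $4$-pyramid sits inside a larger pyramid one should still observe that $c(G-\{u,v\})$ equals $|N_G(u)\cap N_G(v)|$, which is at least $4$ either way, so the contradiction with Lemma~\ref{lem2} goes through unchanged.
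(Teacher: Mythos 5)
Your proposal is correct and is essentially the paper's own proof: both extract the edge $uv$ with $|N_G(u)\cap N_G(v)|\ge 4$ from the induced $4$-pyramid, apply Lemma~\ref{lem1} to get $c(G-\{u,v\})\ge 4 > |\{u,v\}|+1$, and conclude via the necessary condition of Lemma~\ref{lem2}. Your added remark that Lemma~\ref{lem1} gives an exact count (so the bound holds regardless of whether the pyramid sits inside a larger one) is a careful touch the paper glosses over, but it does not change the argument.
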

\begin{proof}
Let the $4$-pyramid in $G$ is due to the edge $uv$.  Clearly, $|N_{G}(u) \cap N_{G}(v)| \geq 4$. 
By Lemma \ref{lem1}, $c(G-\{u,v\}) >|\{u,v\}|+1$ and from  Lemma \ref{lem2}, it follows that $G$ has no Hamiltonian path.  $\hfill  \qed$ 
\end{proof}
The converse of the above lemma is not true and a counter example is illustrated in Figure \ref{fig3}. 
The example highlights the fact that there exist 2-trees with no $4$-pyramid and contain $3$-pyramids, yet it does not have Hamiltonian paths.  We shall now focus our structural analysis on 2-trees containing  $3$-pyramids.  In Lemma $\ref{lem4}$,  we show that $4$-pyramid free 2-trees having exactly one $3$-pyramid has a Hamiltonian path.
\begin{figure}[h!]
\begin{center}
	\includegraphics[scale=0.9]{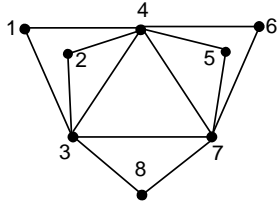}
\caption{$4$-pyramid free \emph{2-}tree having no Hamiltonian path}
\label{fig3}
\end{center}
\end{figure} 
\begin{lemma}\label{lem4}
 Let $G$ be a $4$-pyramid free 2-tree.  If $G$ contains exactly one $3$-pyramid as an induced subgraph, then there exist a Hamiltonian path in $G$.
\end{lemma}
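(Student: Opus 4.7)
The plan is to isolate the unique $3$-pyramid, use its base edge to decompose $G$ into three sub-$2$-trees that are each $3$-pyramid free, invoke Theorem~\ref{thm1} inside each piece, and stitch the resulting $(u,v)$-paths into a single Hamiltonian path through the two pivot vertices. Since an induced $3$-pyramid has a unique pair of apex vertices (the two of degree $4$ in the $5$-vertex subgraph), the uniqueness hypothesis supplies a unique edge $uv\in E(G)$ with $|N_G(u)\cap N_G(v)|=3$; write $N_G(u)\cap N_G(v)=\{a,b,c\}$. Lemma~\ref{lem1} gives $c(G-\{u,v\})=3$, and the tree-of-triangles structure of $2$-trees places $a,b,c$ in distinct components, say $C_a,C_b,C_c$ with $x\in C_x$.

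For each $x\in\{a,b,c\}$, set $G_x=G[C_x\cup\{u,v\}]$. I would first verify that $G_x$ is a $2$-tree: restricting a perfect elimination ordering of $G$ to $V(G_x)$ still peels off simplicial degree-two vertices in turn and terminates at the edge $uv$. Next I would show $G_x$ is $3$-pyramid free. For the edge $uv$ inside $G_x$ we have $N_{G_x}(u)\cap N_{G_x}(v)=\{x\}$, since any additional common neighbour would lie in $C_x$ and therefore be a fourth common neighbour of $u,v$ in $G$, contradicting $|N_G(u)\cap N_G(v)|=3$. For any other edge, an induced $3$-pyramid of $G_x$ is already an induced $3$-pyramid of $G$, hence must coincide with the unique one on $\{u,v,a,b,c\}$; but $V(G_x)\cap\{a,b,c\}=\{x\}$, a contradiction.

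Theorem~\ref{thm1} then supplies a Hamiltonian cycle in each $G_x$, and from its proof the cycle is precisely the set of edges $pq$ with $|N_{G_x}(p)\cap N_{G_x}(q)|=1$; in particular $uv$ lies on the cycle, so deleting it yields a Hamiltonian $(u,v)$-path $P_x=(u,\alpha_x,\ldots,\beta_x,v)$ in $G_x$ (with $\alpha_x=\beta_x=x$ when $|C_x|=1$). A Hamiltonian path of $G$ is then obtained by concatenation through the pivots $u$ and $v$:
\[
(\beta_a,\ldots,\alpha_a,\ u,\ \alpha_b,\ldots,\beta_b,\ v,\ \beta_c,\ldots,\alpha_c),
\]
using $P_a$ reversed, $P_b$, and $P_c$ with the shared endpoints $u,v$ stripped off. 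The linking edges $\alpha_a u,\ u\alpha_b,\ \beta_b v,\ v\beta_c$ all belong to $G$ by construction, and every vertex of $V(G)=\{u,v\}\cup C_a\cup C_b\cup C_c$ is visited exactly once.

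The main obstacle is the verification in the second paragraph that each $G_x$ is a $3$-pyramid free $2$-tree, so that Theorem~\ref{thm1} becomes applicable; once this is in place the construction is essentially forced by the unique three-component decomposition and the two pivot vertices, and the degenerate cases $|C_x|=1$ are absorbed transparently.
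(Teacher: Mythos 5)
Your proposal is correct and follows essentially the same route as the paper: split $G$ at the base edge $uv$ of the unique $3$-pyramid into the three $2$-trees $G[C_i\cup\{u,v\}]$, obtain a $(u,v)$-Hamiltonian path in each via Theorem~\ref{thm1}, and concatenate them through the pivots $u$ and $v$. You additionally verify details the paper treats as immediate (that each piece is a $3$-pyramid free $2$-tree and that $uv$ lies on each Hamiltonian cycle, so a $(u,v)$-path exists), which strengthens rather than changes the argument.
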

\begin{proof}
Let the $3$-pyramid is on the edge $uv$.  Note that $|N_{G}(u) \cap N_{G}(v)| = 3$.  By Lemma \ref{lem1}, $c(G-\{u,v\})=3$ and let  $C_{1}, C_{2},$ and $C_{3}$ be those components.  Let $V(C_{1}) = \{u_{1}, u_{2}, \ldots, u_{i}\}, i \geq 1$,  $V(C_{2}) = \{ v_{1}, v_{2}, \ldots, v_{j}\}, j \geq 1$, and $V(C_{3}) = \{w_{1}, w_{2}, \ldots, w_{k}\}, k \geq 1$.  Consider the graphs $G_{i}, 1\le i\le3$ induced on $V(C_{i}) \cup \{u,v\}, 1\le i\le3$, respectively.  Clearly each $G_{i}, 1\leq i\leq 3$ is a 3-pyramid free 2-tree.  By Theorem 1, each $G_{i}, 1\leq i\leq 3$ has a Hamiltonian cycle and hence a Hamiltonian path.  The $(u,v)$-Hamiltonian path of $G_{1},  G_{2},$ and  $G_{3}$ are $(u, u_{1}, u_{2}, \ldots, u_{i}, v)$ ,  $(u, v_{1}, v_{2}, \ldots, v_{j},v)$, and $(u, w_{1}, w_{2}, \ldots, w_{k}, v)$, respectively.  The  path $(u_{1}, u_{2}, \ldots, u_{i}, v, v_{j},$ $ v_{j-1}, \ldots, v_{2}, v_{1}, u, w_{1}, w_{2}, \ldots, w_{k})$ is a Hamiltonian path in $G$. $\hfill \qed$
\end{proof}
We next present some combinatorial observations on $4$-pyramid free 2-trees with at least two $3$-pyramids for the existence of Hamiltonian paths. We also observe that not all such 2-trees possess Hamiltonian paths.  From now on we shall work with such 2-trees for our discussion.
%%%%%%%%%%%%%%%%%%%%%%%%%%%%%%%%%%%%        VERTEX PRUNING         %%%%%%%%%%%%%%%%%%%%%%%%%%%%%%%%%%%% 
\subsection{A Simplification (Vertex Pruning)} \label{secg0}
We now present an approach that transforms a $4$-pyramid free 2-tree with $3$-pyramids into a 2-tree without $3$-pyramids.  Intuitively, for such a $3$-pyramid with base edge $uv$, there are three 2-trees growing out of $uv$.   While pruning, out of the three 2-trees we retain two and prune the other.  While doing so, to remember the pruned 2-tree, we introduce coloring and labeling as part of our approach.  Coloring of $uv$ signifies that there is a 2-tree $H$ growing from $uv$ and $label(uv)$ signifies the vertices of $H$.\\\\
For a 2-tree $G$, by vertex pruning we remove vertices of degree 2 satisfying some property and color some of the edges in $G$, based on the \emph{closeness} property.  In particular, a vertex $v$ of degree 2 is pruned if its close edge, $close(v)$ is not colored and on pruning $v$, $close(v)$ is colored blue.   
Let $N_G(v)=\{u,w\}$, and on deleting $v$, we color the vertices $u,w$ blue and also the edge $uw$ blue.  
We remember the pruned vertices using a label associated with $uv$.  Initially all the edges are unlabeled, 
i.e., $label(uv)=\epsilon$ (empty string) for every $uv\in E(G)$.
On deleting $v$, we label $uw$ as follows:
\begin{itemize}
\item if $label(uv)=label(vw)=\epsilon$, then $label(uw)=(v)$
\item if $label(uv)=\epsilon$, and $label(vw)\neq\epsilon$ then $label(uw)=(v,label(vw))$
\item if $label(uv)\neq\epsilon$, and $label(vw)=\epsilon$ then $label(uw)=(label(uv),v)$
\item otherwise $label(uw)=(label(uv),v,label(vw))$
\end{itemize} 
For example, if the blue edges $uv$ and  $vw$ are labeled $ (u_{1},u_{2},\ldots,u_{i} )$ and $ (v_{1},v_{2},\ldots,v_{j})$,  respectively, then the label of the new blue edge  $uw$  will be $ (u_{1},u_{2},\ldots,u_{i},v,v_{1},v_{2},\ldots,v_{j}) $.  \\\\
For any 2-tree $G$, we define a sub 2-tree $G^0$ of $G$, which is obtained by recursively pruning 2-degree vertices $v$ of $G$ such that $close(v)$ is uncolored.  Note that if $G$ is a $4$-pyramid free 2-tree, then $G^0$ is $3$-pyramid free.
Further, for every 2-degree vertex $v$ in $G^{0}$, $close(v)$ is blue.  Since $G^{0}$ is $3$-pyramid free, $G^{0}$ contains a Hamiltonian cycle, and hence a Hamiltonian path as well.  However, our objective is to find a Hamiltonian path in $G^{0}$ containing all the blue edges, as labels of blue edges records the pruned vertices.  Further, such a Hamiltonian path can be easily extended to a Hamiltonian path in $G$ using the labels.  Given this observation, we would like to investigate $G^0$ to get some more insights.  We call $G^0$ as the \emph{vertex pruned} 2-tree of $G$.  An \emph{expanded} 2-tree of $G^0$ is a 2-tree obtained by growing each blue edge in $G^0$ with a $3$-pyramid free 2-tree corresponding to the label of the blue edge. 
We define the \emph{Blue graph} $B(G^0)$ of $G^0$ as a sub graph induced on the blue edges of $G^0$.  
For the next lemma we consider a $4$-pyramid free 2-tree $G$ with at least two $3$-pyramids and let $G^0$ be the vertex pruned 2-tree of $G$ and $B(G^0)$ be the blue graph of $G^0$.
\begin{figure}[h!]
\begin{center}
		\includegraphics[scale=1.1]{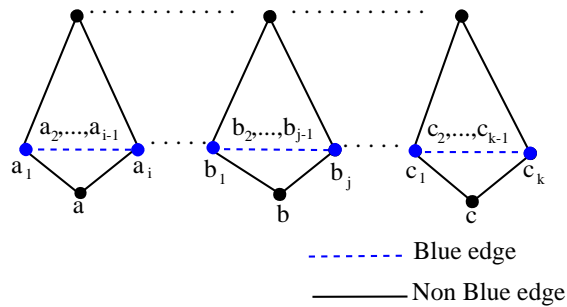}
\caption{Preprocessed 2-tree $G^{0}$ having three $2$-degree vertices.}
\label{figlem5}
\end{center}
\end{figure}
\newpage
\begin{lemma}\label{lem5}
If $G$ has a Hamiltonian path, then the following hold: \\
(i) $G^{0}$  has exactly two vertices of degree $2$.\\
(ii)  $\Delta (B(G^{0})) \le4$ \\
(iii) For $s\in V(G^0)$ such that $d_{G^0}(s)=2$ and $N_{G^0}(s)=\{u,v\}$, at most one of $u,v$ has degree $3$ in $B(G^0)$
\end{lemma}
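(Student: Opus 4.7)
Plan: The three parts all rest on a single preparatory observation. If $G$ has a Hamiltonian path $P$, then the subsequence of $P$ restricted to $V(G^0)$ is a Hamiltonian path $P^0$ of $G^0$. Indeed, every pruned vertex $w$ lies in the pruned 2-tree of a unique blue edge $xy$ of $G^0$, and $N_G(w)$ consists only of vertices of that 2-tree and of $\{x,y\}$; hence any maximal run of pruned vertices in $P$ lies in a single pruned 2-tree and is bounded on each side by $x$, $y$, or an endpoint of $P$. Consequently each blue edge $xy$ of $G^0$ satisfies exactly one of: (a) $xy$ is a path-edge of $P^0$, with the pruned 2-tree traversed between $x$ and $y$; or (b) the pruned 2-tree of $xy$ is \emph{attached} at a $P^0$-endpoint equal to $x$ or $y$. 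Since pruned vertices of distinct blue edges are non-adjacent in $G$, each $P^0$-endpoint hosts at most one attachment, giving at most two attachments overall.

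Part (i) follows from the pruning rule: every degree-$2$ vertex $s$ of $G^0$ has blue close edge $uv$, and if $s$ is internal in $P^0$ then $u, s, v$ occur consecutively, preventing $uv$ from being a $P^0$-edge and forcing its pruned 2-tree to be attached at a $P^0$-endpoint in $\{u, v\}$. Because $G^0$ is $3$-pyramid free, at most two degree-$2$ vertices share a single close edge, and two such vertices sharing a close edge cannot both be internal in $P^0$ (otherwise a common endpoint would occur twice). A short case analysis on $k \ge 3$ degree-$2$ vertices shows that the resulting attachment demands always exceed the two $P^0$-endpoint slots, contradicting the existence of $P$.

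Part (ii) is immediate: at any vertex $u$, $P^0$ provides at most two path-edges incident to $u$, and any remaining blue edge at $u$ must be attached at a $P^0$-endpoint equal to $u$ or to the other end of that blue edge. With only two $P^0$-endpoints and each serving at most one attachment, $d_B(u) \le 2 + 2 = 4$.

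For part (iii), assume both $u$ and $v$ have $d_B \ge 3$. Since $d_{G^0}(u), d_{G^0}(v) \ge 3$, neither is degree-$2$; and both being the two $P^0$-endpoints would force $|V(G^0)| = 3$, collapsing $d_B$ to at most $2$, a contradiction. Assuming first that neither $u$ nor $v$ is a $P^0$-endpoint, the blue edge $uv$ cannot be attached and must be a $P^0$-edge, which in turn forces $s$ to be a $P^0$-endpoint. The remaining $d_B(u) - 1 \ge 2$ blue edges at $u$ must then be absorbed by at most one further blue path-edge at $u$ plus at most one attachment at the single remaining non-$s$ $P^0$-endpoint; the same accounting applies at $v$. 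Since the lone non-$s$ endpoint can host only one attachment in total, at least one blue edge incident to $\{u, v\}$ is left unhandled, a contradiction. The subcase where one of $u, v$ is itself a $P^0$-endpoint is handled analogously. The main obstacle in (iii) is precisely this bookkeeping, where blue edges at $u$ and $v$ compete for the same attachment slot at the lone available endpoint.
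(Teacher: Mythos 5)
Your preparatory observation is the right framework and is actually more careful than the paper's own argument: the paper never formalizes the projection of a Hamiltonian path of $G$ onto $G^0$, and its proof of (i) only considers inserting $label(b_1b_j)$ \emph{between} $b_1$ and $b_j$, silently ignoring the possibility you call an attachment (an endpoint of the $G$-path lying inside a pruned 2-tree). Your part (ii) is a clean slot-counting alternative to the paper's argument, which instead exhibits a separator: for blue neighbours $u_1,\ldots,u_5$ of $v$ it shows $c(G-\{u_2,v,u_4\})>4$ and invokes Lemma \ref{lem2}. Both are valid; yours trades the component count for the path/attachment dichotomy you established.

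However, part (iii) has a genuine gap. Your accounting assumes every unabsorbed blue edge at $u$ or $v$ must be attached at ``the single remaining non-$s$ endpoint,'' but the endpoint $s$ can itself host an attachment when $su$ or $sv$ is blue (which is entirely possible: a vertex pruned onto $sv$ makes $sv$ blue while $s$ survives with degree $2$). Concretely, take blue edges $uv,us,uy$ at $u$ and $uv,vs,vz$ at $v$, and the path $P^0=(s,u,v,z,\ldots,y)$: then $us$ and $vz$ are blue path-edges traversed in transit, the tree of $vs$ is attached at the endpoint $s$, and the tree of $uy$ is attached at the endpoint $y$ --- all six blue incidences at $\{u,v\}$ are absorbed and your local counting yields no contradiction. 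The paper closes exactly this case by going beyond local bookkeeping: it first disposes of blue \emph{separator} edges $uz'$, $z'\neq v$, via $c(G-\{u,v,z'\})>4$ and Lemma \ref{lem2}, and then, in the Table \ref{tabdelta3} case analysis, uses the \emph{other} degree-$2$ vertex $t$ of $G^0$ (which exists by part (i)) whose own close-edge tree competes for the far endpoint, so that in each of $P_1,\ldots,P_5$ one of $K$, $L$, or $t$ remains unvisited. Your proof never brings in this second degree-$2$ vertex, and without it the subcase above survives. A milder instance of the same over-counting affects your part (i): with two internal degree-$2$ vertices and one serving as a $P^0$-endpoint, the demands equal (not exceed) the two slots, and the contradiction must instead come from the forced adjacency of two degree-$2$ vertices (which in a 2-tree happens only in $K_3$, excluded here since $G$ has at least two $3$-pyramids); the claim that demands ``always exceed the two endpoint slots'' is not literally true.
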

\begin{proof}
\emph{(i)} Clearly, there are at least two vertices of degree $2$ in $G^{0}$ as $G$ has at least two $3$-pyramids.  Assume for a contradiction that there exist at least three vertices $a,b,c$ of degree $2$ in $G^{0}$ such that $N_{G^0}(a)=\{a_1,a_i\}$, $N_{G^0}(b)=\{b_1,b_j\}$, $N_{G^0}(c)=\{c_1,c_k\}$ (see Figure \ref{figlem5}).  
Clearly, $G^0$ has a Hamiltonian cycle and hence a Hamiltonian path.  Now we claim that any longest path $P$ in $G^0$ (which is a Hamiltonian path in $G^0$) cannot be transformed to a Hamiltonian path $P'$ in $G$.  Note that one of $a,b,$ and $c$ has a specific order of appearence in $P$.  In particular,  either $a_1,a,a_{i}$ or $b_1,b,b_{j}$ or $c_1,c,c_{k}$ appear consecutively in $P$.  Without loss of generality, let $b_1,b,b_{j}$ appear consecutively in $P$.   While extending $P$ to $P'$, we must include $label(b_1,b_{j})$, thus we get $(\ldots,b_1,label(b_1b_{j}),b_{j},\ldots)$.  However in this extension $b$ is unvisited in $P'$.  Therefore, $P'$ is not a Hamiltonian path.  This shows that any $P$ can not be extended to any Hamiltonian path in $G$, a contradiction to the premise. \\\\
\emph{(ii)} If $\Delta (B(G^{0}))\ge5$, then there exist $\{v,u_1,\ldots,u_5\}\subseteq V(B(G^0))$ such that $vu_i\in E(B(G^0)), 1\le i\le 5$.  Since $G^0$ is $3$-pyramid free and contains a Hamiltonian cycle, $G^0-\{v\}$ is connected.  Further, there exist a path $u_1,\ldots,u_2,\ldots,u_3,\ldots,u_4,\ldots,u_5$ in $G^0-\{v\}$ and for every $1<i<5$, $u_{i-1}$ and $u_{i+1}$ are in different components of $G^0-\{vu_i\}$.  Note that $c(G-\{u_2,v,u_4\})>4$, and by Lemma \ref{lem2}, $G$ has no Hamiltonian path, a contradiction.  
\\\\
\emph{(iii)} Assume for a contradiction that $d_{B(G^{0})}(u)=d_{B(G^{0})}(v)=3$.  Note that the edge $uv$ is a blue edge.  
%one of $u,v$, say $v$ have degree $3$ in $G^0$, and 
If there exist a blue edge $uz', z'\neq v$, such that $c(G^0-\{u,z'\})=2$, then $c(G-\{u,v,z'\})>4$, and by Lemma \ref{lem2}, $G$ has no Hamiltonian path, a contradiction.  Hence we can assume that there exist two edges $us,uy$ such that $c(G^0-\{u,s\})<2$, $c(G^0-\{u,y\})<2$, and $us,uy$ are blue as shown in Figure \ref{figlem5.3}.  Symmetric argument holds for the vertex $v$.   We now show by case analysis that any longest path $P$ in $G^0$ can not be transformed into any Hamiltonian path $P'$ in $G$.  Since $P$ is a Hamiltonian path, $P$ must contain the vertex $s$.  Depending on the position of $s$ in $P$ we see various possibilities $P_1,\ldots,P_5$ for $P$ as follows.  $P_1=(\ldots,u,s,v,\ldots)$, $P_2=(u,s,v,\ldots)$, $P_3=(v,s,u,\ldots)$, $P_4=(s,u,v,\ldots)$, $P_5=(s,v,u,\ldots)$.  Now we shall show that each of the above Hamiltonian paths $P_i,1\le i\le5$ in $G^0$ can not be extended to any Hamiltonian path $P'$ in $G$.  
\begin{figure}[h!]
\begin{center}
		\includegraphics[scale=1.2]{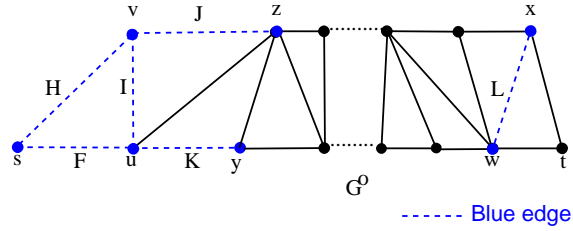}
\caption{An illustration for the proof of Lemma \ref{lem5}.(iii) }
\label{figlem5.3}
\end{center}
\end{figure}
We present the detailed case analysis in Table \ref{tabdelta3}. $\hfill \qed$
\begin{table}
\begin{center} 
\begin{tabular}{|l|l|}
\hline $~$Case & $~$Justification\\
\hline
Expanding $P_1$ to $P'$ & Since $P_1$ neither starts or ends in $u$ or $v$, the vertices in $I$ remain unvisited in $P'$\\\hline
Expanding $P_2$ to $P'$ & Since $P_2$ starts or ends at $u$, on expanding, one among $I,K,L$ or the vertex $t$ remain unvisited in $P'$.  \\
& if $P'$ visits the vertices $(I,u,F,s,H,v,J,z,\ldots,L,w,t)$, then the vertices in $K$ remain unvisited in $P'$. \\  
& if $P'$ visits the vertices $(K,u,F,s,H,v,J,z,\ldots,L,w,t)$, then the vertices in $I$ remain unvisited in $P'$. \\  
& if $P'$ visits the vertices $(I,u,F,s,H,v,J,z,\ldots,L,\ldots,K)$, then the vertex $t$ remain unvisited in $P'$. \\  
& if $P'$ visits the vertices $(I,u,F,s,H,v,J,z,\ldots,t,\ldots,K)$, then the vertices in $L$ remain unvisited in $P'$. \\  \hline
Expanding $P_3$ to $P'$ & Since $P_3$ starts or ends at $v$, on expanding, one among $I,J,L$ or the vertex $t$ remain unvisited in $P'$.  \\
& Arguments are symmetric to that of $P_2$\\\hline
Expanding $P_4$ to $P'$ & On expanding, one among $K,L$ or the vertex $t$ remain unvisited in $P'$.  \\
& if $P'$ visits the vertices $(H,s,F,u,I,v,J,z\ldots,L\ldots K)$, then the vertex $t$ remain unvisited in $P'$. \\  
& if $P'$ visits the vertices $(H,s,F,u,I,v,J,z\ldots,t\ldots K)$, then the vertices in $L$ remain unvisited in $P'$. \\  
& if $P'$ visits the vertices $(H,s,F,u,I,v,J,z\ldots,L,w,t)$, then the vertices in $K$ remain unvisited in $P'$. \\  \hline
Expanding $P_5$ to $P'$ & On expanding, one among $J,L$ or the vertex $t$ remain unvisited in $P'$.  \\
& Arguments are symmetric to that of $P_4$\\ \hline
\end{tabular}
\end{center}
\caption{Possibilities of expanding $P$ in Lemma \ref{lem5}.(iii)}
\label{tabdelta3}
\end{table}
\end{proof} 
%%%%%%%%%%%%%%%%%%%%%%%%%%%%%%%%  ANOTHER SIMPLIFICATION %%%%%%%%%%%%%%%%%%%%%%%%%%%%%%  
\subsection{Another Simplification} \label{secg1}
In the previous section we have investigated the structure of $4$-pyramid free 2-trees with at least two $3$-pyramids by introducing the notion vertex pruning.  In this section, we shall obtain some more insights by introducing another simplification.  Our definition of $G$ and $G^0$ remains the same and in this section, we do not work with arbitrary $G^0$, instead, we work with $G^0$ satisfying the following conditions to obtain the 2-tree $G^1$. \\
(i) $G^{0}$  has exactly two vertices of degree $2$.\\
(ii)  $\Delta (B(G^{0})) \le4$ \\
(iii) For $s\in V(G^0)$ such that $d_{G^0}(s)=2$ and $N_{G^0}(s)=\{u,v\}$, at most one of $u,v$ has degree $3$ in $B(G^0)$ \\
Note that this is precisely the conclusion of Lemma \ref{lem5}.
The results presented in this section are based on such restricted $G^0$ and its corresponding $G^1$. 
We define the 2-tree $G^1$ obtained from such $G^0$ as follows;  let $s,t$ be two vertices of degree $2$ in $G^{0}$ and $G^{1}$=$G^{0}-\{s,t\}$, and $N_{G^0}(s)=\{u,v\}$ and  $N_{G^0}(t)=\{x,w\}$.  
We shall classify four types of Hamiltonian paths in $G^1$ based on $d_{B(G^1)}(z), z\in \{N_{G^0}(s)\cup N_{G^0}(t)\}$.  \\\\
Type 1 $(u,x)$-Hamiltonian path if $d_{B(G^1)}(v)=2$, $d_{B(G^1)}(w)=2$, $d_{B(G^1)}(u)=1$, and $d_{B(G^1)}(x)=1$.\\
Type 2 $(u,x)$-Hamiltonian path if $d_{B(G^1)}(z)=1, z\in\{u,v,w,x\}$ and $d_{G^1}(u)=d_{G^1}(x)=2$.\\
Type 3 $(u,x)$-Hamiltonian path if $d_{B(G^1)}(v)=2$, $d_{B(G^1)}(u)=1$ and $d_{B(G^1)}(z)=1, z\in\{w,x\}$ and $d_{G^1}(x)=2$.\\
Type 4 $(u,x)$-Hamiltonian path if $d_{B(G^1)}(w)=2$, $d_{B(G^1)}(x)=1$ and $d_{B(G^1)}(z)=1, z\in\{u,v\}$ and $d_{G^1}(u)=2$.\\
%%%%%%%%%%%%%%%%%%%%%%%%%%%%%%  THEOREM 3  %%%%%%%%%%%%%%%%%%%%%%%%%%%%%%%%%%%%%%%%%%%%
\begin{theorem}\label{thm3}
 $G$ has a Hamiltonian path if and only if $G^1$  has type 1 or type 2 or type 3 or type 4 $(u,x)$-Hamiltonian path containing all the blue edges of $G^{1}$. 
\end{theorem}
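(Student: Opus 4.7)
The plan is to prove the two directions of the biconditional by carefully relating Hamiltonian paths in $G$ to Hamiltonian paths in the simplified graphs $G^{0}$ and $G^{1}$, using the blue-edge labelling to encode the pruned sub-$2$-trees.

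For the \emph{if} direction, suppose $G^{1}$ has a $(u,x)$-Hamiltonian path $P^{1}$ of one of the four types, containing every blue edge of $G^{1}$. I would first extend $P^{1}$ to an $(s,t)$-Hamiltonian path $P^{0}$ of $G^{0}$ by prepending the edge $su$ (so that $P^{0}$ starts $s,u,\ldots$) and appending the edge $xt$ (so that $P^{0}$ ends $\ldots,x,t$). The type conditions on the blue-degrees of $u,v,w,x$ are precisely designed so that the two blue edges $uv$ and $wx$ --- which record the pruning of $s$ and $t$ respectively --- both appear in $P^{1}$, and so that no other blue edge incident to $s$ or $t$ is left unused in $P^{0}$; I would verify this by a short case analysis through the four types. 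Next, for each blue edge $yy'$ traversed by $P^{0}$, I would invoke the expansion procedure from Section~\ref{secg0}: the label of $yy'$ encodes a $3$-pyramid free sub-$2$-tree having $yy'$ as base, which by Theorem~\ref{thm1} has a Hamiltonian cycle and hence a $(y,y')$-Hamiltonian path. Splicing each such path into $P^{0}$ in place of the corresponding blue edge yields a Hamiltonian path of $G$.

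For the \emph{only if} direction, suppose $P$ is a Hamiltonian path of $G$. By Lemma~\ref{lem5}, $G^{0}$ has exactly two degree-$2$ vertices $s$ and $t$, so the construction of $G^{1}$ is well-defined. Because each pruned sub-$2$-tree attached to a blue edge $yy'$ of $G^{0}$ is connected to the remainder of $G$ only through $y$ and $y'$, the portion of $P$ lying inside that sub-tree is a single contiguous run entering at one of $y, y'$ and exiting at the other; contracting this run back to the blue edge $yy'$ turns $P$ into a Hamiltonian path $P^{0}$ of $G^{0}$ that traverses \emph{every} blue edge. (A small subtlety is that $P$ could in principle start or end interior to a pruned sub-tree; since each such sub-tree is $3$-pyramid free, Theorem~\ref{thm1} supplies a $(y,y')$-Hamiltonian path through it, so $P$ may be re-routed so that its endpoints lie in $V(G^{0})$ without loss of generality, and I would handle this normalisation first.) In $P^{0}$, the vertex $s$ must be an endpoint: if $s$ were internal then $P^{0}$ would contain the sub-sequence $(u,s,v)$, the blue edge $uv$ would not appear in $P^{0}$, and the sub-$2$-tree labelled by $uv$ could not be expanded --- contradicting that $P$ visits every vertex of $G$. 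The same argument places $t$ at the other end. Deleting $s$ and $t$ from $P^{0}$ therefore yields a Hamiltonian path $P^{1}$ of $G^{1}$ with one endpoint in $\{u,v\}$ and the other in $\{w,x\}$, still using every blue edge of $G^{1}$. A final case analysis on which pair of endpoints is realised, combined with the degree constraints supplied by Lemma~\ref{lem5}(ii)--(iii), shows that $P^{1}$ satisfies one of the four type conditions and can be reoriented to be a $(u,x)$-path.

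The main obstacle I anticipate is the closing case analysis in the necessity direction. Pinning down which of types 1--4 is realised requires correlating the blue-degrees of $u,v,w,x$ in $G^{1}$ --- which differ from those in $G^{0}$ depending on whether the edges $us, vs, wt, xt$ are themselves blue --- with the precise endpoint configuration of $P^{1}$, and ruling out combinations that would violate Lemma~\ref{lem5}(iii) or force $c(G-S) > |S|+1$ via Lemma~\ref{lem2}. This bookkeeping between $B(G^{0})$ and $B(G^{1})$ is where I expect the technical heart of the argument to lie.
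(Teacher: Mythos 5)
There is a genuine gap in your sufficiency direction, and it sits exactly at the step you propose to dispatch ``by a short case analysis through the four types.'' In $G^{0}$ the edges $su$, $sv$, $xt$, $tw$ incident to the two degree-$2$ vertices may themselves be blue with nonempty labels: an earlier pruning of a vertex $z$ with $N_{G}(z)=\{s,v\}$ colours $sv$ blue with $label(sv)=(z)$, and nothing forbids this, because the four type conditions constrain only degrees in $B(G^{1})$ and $G^{1}$, while the edges $sv,su,xt,tw$ vanish when $s,t$ are deleted. Now $s$ is an \emph{endpoint} of your path $P^{0}=(s,u,\ldots,x,t)$, so only one of its two incident edges can lie on $P^{0}$; the blue edge $sv$ is therefore necessarily untraversed, and since your expansion step only expands blue edges \emph{traversed} by $P^{0}$, the vertices recorded in $label(sv)$ (and symmetrically $label(tw)$) are never visited. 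The verification you anticipate cannot succeed, and the constructed path is not Hamiltonian in $G$ whenever these labels are nonempty. The paper avoids this precisely by \emph{not} starting the path at $s$: its expanded path is $P=(label(vs),s,label(su),R',label(xt),t,label(tw))$, i.e., it begins inside the pruned sub-$2$-tree recorded on $vs$ and ends inside the one recorded on $tw$; the labelling scheme guarantees that $(v,label(vs),s)$ is a path, so dropping $v$ yields a valid initial segment ending at $s$.

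A secondary issue, on the necessity side, compounds the first: your normalisation ``$P$ may be re-routed so that its endpoints lie in $V(G^{0})$'' is asserted rather than proved --- Theorem \ref{thm1} supplies a $(y,y')$-Hamiltonian path of the hanging subtree, but does not show the global path can be re-stitched around a relocated endpoint --- and, in light of the above, it is also the wrong move: whenever $label(vs)$ or $label(tw)$ is nonempty, \emph{every} Hamiltonian path of $G$ of the expanded form necessarily starts and ends strictly inside pruned subtrees, so endpoints interior to pruned subtrees are the generic case, not an anomaly to be normalised away. The correct treatment is a careful contraction in which a terminal run through a hanging subtree contracts to its attachment vertex ($s$ or $t$), after which your argument that $s$ and $t$ must be the endpoints of $P^{0}$, and your deletion of $s,t$ to obtain $P^{1}$, go through. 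Otherwise your overall architecture (expand blue edges via labels for sufficiency; contract runs for necessity) matches the paper's, and your necessity sketch is in fact more explicit than the paper's own two-line argument, which only shows that a $G^{1}$-path missing a blue edge cannot expand to a Hamiltonian path of $G$ and is silent on why one of types 1--4 is realised.
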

\begin{proof}
\emph{Sufficiency:} Let $R=(u,\ldots ,b_i,b_{i+1},\ldots,x)$ be a $(u,x)$-Hamiltonian path containing all the blue edges.  Replace every blue edge $b_ib_{i+1}$ with $(b_i,H,b_{i+1})$ where $H$ is the label of the blue edge in $G^{1}$ to get the expanded path $R'$.  When $R'$ is further extended by including labels, we get a path $P=(label(vs),s,label(su),R',label(xt),t,label(tw))$, which is a Hamiltonian path in $G$. \\
\emph{Necessity:} Let $b_ib_{i+1}$ is a blue edge in $G^1$.  Assume for a contradiction that there is no Hamiltonian path $Q$ in $G^1$ such that $\{b_i,b_{i+1}\}$ appear consecutively in $Q$. That is, $Q=(v,\ldots,b_i,\ldots,b_{i+1}\ldots,w)$ does not contain all blue edges of $G^1$.  On expanding $Q$, to get a path $P$ in $G$, clearly the $label(b_i,b_{i+1})$ does not appear in $P$.  This implies that $P$ is  not a Hamiltonian path in $G$, contradicting the premise.
$\hfill \qed$
\end{proof}
\begin{lemma}\label{lem9}
If $G$ has a Hamiltonian path, then $\Delta (B(G^{1})) \leq 2$. 
\end{lemma}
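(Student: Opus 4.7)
The plan is to obtain the bound as a direct corollary of Theorem \ref{thm3}. Assuming $G$ has a Hamiltonian path, Theorem \ref{thm3} produces a Hamiltonian path $P$ of $G^{1}$ (with a specific pair of endpoints among $\{u,v,w,x\}$, corresponding to one of the four types) that contains every blue edge of $G^{1}$. First I would fix this path $P$ and record the set inclusion $E(B(G^{1}))\subseteq E(P)$, which is exactly the content of the clause requiring $P$ to contain all the blue edges of $G^{1}$.

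Next I would invoke the elementary fact that in a path every vertex has degree at most $2$ (degree $2$ at internal vertices, degree $1$ at the two endpoints). Combining this with the inclusion above, for any $z\in V(G^{1})$ one has
\[
d_{B(G^{1})}(z)\;\le\;d_{P}(z)\;\le\;2,
\]
so $\Delta(B(G^{1}))\le 2$, as required. Contrapositively, if some vertex $z$ had three blue edges incident to it in $G^{1}$, then any Hamiltonian path of $G^{1}$ would necessarily omit at least one of those blue edges, and by Theorem \ref{thm3} this rules out the existence of a Hamiltonian path in $G$.

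There is essentially no obstacle to executing this plan; the structural work lies entirely in Theorem \ref{thm3} (which in turn rests on Lemma \ref{lem5}). The only point worth double-checking is that the Hamiltonian path produced by Theorem \ref{thm3} really spans $V(G^{1})$, so that the blue edges appearing in it are literally edges of $P$ rather than blue edges recovered only in the expanded 2-tree; this is immediate from the definition of a Hamiltonian path in $G^{1}$ together with the explicit form of $P$ described in the sufficiency direction of Theorem \ref{thm3}.
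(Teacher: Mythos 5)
Your proposal is correct and matches the paper's own proof of Lemma~\ref{lem9}: both invoke the necessity direction of Theorem~\ref{thm3} to obtain a Hamiltonian path in $G^{1}$ containing all blue edges, and both conclude from the elementary fact that a path has maximum degree $2$ that no vertex can carry three blue edges. The only cosmetic difference is that you state this as a direct degree inequality $d_{B(G^{1})}(z)\le d_{P}(z)\le 2$ while the paper phrases it as a contradiction; the content is identical.
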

\begin{proof}
Assume for a contradiction that there exist a vertex $z$ such that $d_{B(G^1)}(z)\ge3$.  Since there exist a Hamiltonian path $P$ in $G$, by Theorem \ref{thm3}, there exist a $(u,x)$-Hamiltonian path $Q$ in $G^1$ containing all the blue edges of $G^1$.  Clearly, $Q$ must contain all three blue edges incident on $z$.  However, it is well known that a path can not contain three edges having a vertex in common, a contradiction.  Therefore, no such $z$ exists.  $\hfill  \qed $
% This contradicts the well known fact, and 
\end{proof}
\begin{lemma}\label{lem8}
If $G$ has a  Hamiltonian path, then for every vertex $s\in V(G^0)$ such that $d_{G^0}(s)=2$, $N_{G^0}(s)=\{u,v\}$, one of the following holds:\\
(1) $d_{B(G^1)}(u)=1$,  $d_{B(G^1)}(v)=2$\\
(2) $d_{B(G^1)}(u)=2$,  $d_{B(G^1)}(v)=1$\\
(3) $d_{B(G^1)}(u)=d_{B(G^1)}(v)=1$
\end{lemma}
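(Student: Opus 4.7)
The plan is to derive Lemma~\ref{lem8} as a direct corollary of Theorem~\ref{thm3} by inspecting the four types of Hamiltonian paths. Since $G$ has a Hamiltonian path, Lemma~\ref{lem5} ensures that $G^0$ satisfies the structural hypotheses of Section~\ref{secg1}, and in particular has exactly two degree-$2$ vertices $s$ and $t$ with $N_{G^0}(s) = \{u,v\}$ and $N_{G^0}(t) = \{x,w\}$. Theorem~\ref{thm3} then produces a Type $1$, $2$, $3$, or $4$ $(u,x)$-Hamiltonian path in $G^1$ containing every blue edge, and the conclusion of the lemma will fall out of the degree conditions packaged into each type.

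Before invoking those degree conditions I would verify the small sanity check that $\{u,v\}$ and $\{x,w\}$ are genuine subsets of $V(G^1)$, i.e., $s$ and $t$ are non-adjacent in $G^0$. If, say, $t=u$, the edge $sv$ of the 2-tree $G^0$ would lie on a triangle whose third vertex must be the only remaining neighbor of $s$, namely $t$; hence $tv \in E(G^0)$ and $\{s,t,v\}$ is a triangle. Because $d_{G^0}(s) = d_{G^0}(t) = 2$, no further vertex can be attached to any edge of this triangle without violating those degrees, so $G^0$ would coincide with the triangle $\{s,t,v\}$, producing a third degree-$2$ vertex and contradicting Lemma~\ref{lem5}(i). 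Note moreover that the close edge $uv$ of $s$ is blue in $G^0$ by the pruning terminal condition, so $uv$ remains blue in $G^1$ and we already have $d_{B(G^1)}(u), d_{B(G^1)}(v) \ge 1$ for free.

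The remainder is pure bookkeeping: read off the tuple $(d_{B(G^1)}(u), d_{B(G^1)}(v))$ forced by each type. Types $1$ and $3$ both give the unordered pair $\{1,2\}$, matching case (1) or (2) of the lemma depending on which neighbor of $s$ is labelled $u$, while Types $2$ and $4$ both give $\{1,1\}$, matching case (3). The identical enumeration applied to the neighbors $\{x,w\}$ of the other degree-$2$ vertex $t$ yields the same dichotomy, and by Lemma~\ref{lem5}(i) these two are the only $s$ to which the statement applies. There is no substantive obstacle here: the definitions of Types $1$--$4$ were engineered precisely so that the required degree information is visible on the surface, and the only item demanding any thought is the short 2-tree argument that $s$ and $t$ cannot be adjacent.
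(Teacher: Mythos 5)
Your derivation is circular relative to the paper's logical development. You invoke the necessity direction of Theorem~\ref{thm3} at full strength---that a Hamiltonian path in $G$ yields a \emph{Type 1, 2, 3, or 4} path in $G^1$---and then read the lemma off the degree conditions packaged into the type definitions. But those conditions ($d_{B(G^1)}(u)=1$, $d_{B(G^1)}(v)\in\{1,2\}$, etc.) are properties of the graph $G^1$, not of the path, and the paper's proof of Theorem~\ref{thm3} never establishes them: its necessity argument shows only that a Hamiltonian path in $G$ induces a Hamiltonian path in $G^1$ containing all blue edges, saying nothing about endpoints or about $B(G^1)$-degrees. Justifying that the induced path falls into one of the four types is precisely the content of Lemmas~\ref{lem9} and~\ref{lem8}; this is why the paper proves them \emph{after} Theorem~\ref{thm3} and then explicitly restricts Section~\ref{secg2} to those $G^1$ ``satisfying the conclusions of Lemmas~\ref{lem9} and~\ref{lem8}.'' (Note that Lemma~\ref{lem9} also cites Theorem~\ref{thm3}, but it uses only the blue-edge-containment part---the part actually proven---so it is not circular, whereas your use of the type classification is.) The step you dismiss as ``pure bookkeeping'' is thus the theorem-level claim under proof, not something you may assume.

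Concretely, what your proposal omits is the one substantive case: excluding $d_{B(G^1)}(u)=d_{B(G^1)}(v)=2$. The paper obtains $1\le d_{B(G^1)}(z)\le 2$ for $z\in\{u,v\}$ cheaply (the edge $uv$ is blue, and Lemma~\ref{lem9} caps the degree at $2$) and then devotes the entire proof to ruling out the $(2,2)$ configuration: either some blue edge $uz'$, $z'\neq v$, satisfies $c(G^1-\{u,z'\})=2$, in which case $c(G-\{u,v,z'\})>4$ contradicts Lemma~\ref{lem2}, or there is a blue non-separator edge $uy$, and a case analysis on the possible Hamiltonian paths $Q_1,Q_2,Q_3$ of $G^1$---expanded to the paths $P_1,\ldots,P_5$ of Lemma~\ref{lem5}(iii)---shows none extends to a Hamiltonian path of $G$. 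Your side argument that $s$ and $t$ are non-adjacent is correct but peripheral. Without an independent argument killing the $(2,2)$ case, the proof does not go through.
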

\begin{proof} 
Note that since the edge $uv$ is blue, $d_{B(G^1)}(z)\ge1, z\in \{u,v\}$.  From Lemma \ref{lem9} it follows that $d_{B(G^1)}(z)\le2$.  So $1\le d_{B(G^1)}(z)\le2$.  We now show that $d_{B(G^{1})}(u)=d_{B(G^{1})}(v)=2$ is not possible.  Assume for a contradiction, $d_{B(G^{1})}(u)=d_{B(G^{1})}(v)=2$. 
The proof of this claim is similar to Lemma \ref{lem5}.(iii) with minor modification on technical details.  
% Note that one of $u,v$, say $v$ has degree $2$ in $G^1$, and $uv$ is a blue edge.  
If there exist a blue edge $uz', z'\neq v$, such that $c(G^1-\{u,z'\})=2$, then $c(G^0-\{u,z'\})=2$, and further
$c(G-\{u,v,z'\})>4$.  By Lemma \ref{lem2}, $G$ has no Hamiltonian path, a contradiction.  Hence we can assume that there exist an edge $uy$ such that $c(G^0-\{u,y\})<2$, and $uy$ is blue as shown in Figure \ref{figlem8}.  Also, $c(G^1-\{u,y\})<2$, and $uy$ is blue in $G^1$.  Any longest path $Q$ in $G^1$ (which is also a Hamiltonian path in $G^1$) must contain $\{u,v\}$.  That is, $Q$ can be one of $Q_1=(v,u,\ldots)$, $Q_2=(u,v,\ldots)$, $Q_3=(\ldots,u,v,\ldots)$.  Since $G^0$ has a Hamiltonian path, when $Q$ is extended to a Hamiltonian path $P$ in $G^0$, it will include vertices $s$ and $t$.  On such expansion path $P$ will give one of $P_1$ to $P_5$ mentioned in Lemma \ref{lem5}.  In particular $Q_1$ is expanded to $P_i$, $i\in \{3,5\}$, $Q_2$ to $P_j$, $j\in \{2,4\}$, and $Q_3$ to $P_1$. 
At this point an anlaysis similar to Lemma \ref{lem5} will establish that $P_1$ to $P_5$ can not be extended to any Hamiltonian path $P'$ in $G$.  This shows that $Q$ can not be extended to any Hamiltonian path $P'$ in $G$, a contradiction. $\hfill\qed$
\begin{figure}[h!]
\begin{center}
		\includegraphics[scale=1.2]{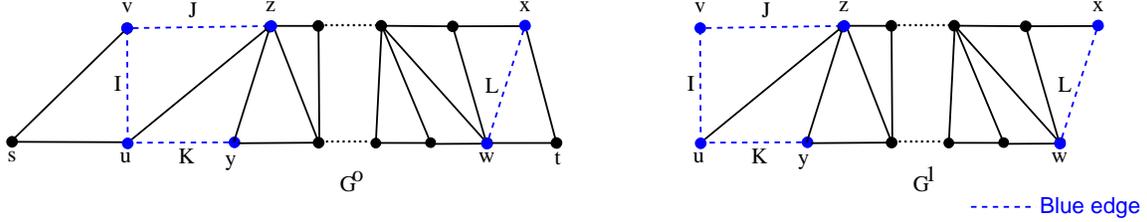}
\caption{An illustration for the proof of Lemma \ref{lem8} }
\label{figlem8}
\end{center}
\end{figure}
\end{proof}
Although in Lemma \ref{lem5} we have shown $\Delta(B(G^0))$ is at most $4$, there are exactly four 2-trees for which $\Delta(B(G^0))=4$.  For the rest $\Delta(B(G^0))$ is at most $3$ which we shall prove in the next lemma.\\\\
To present the next lemma we fix the following notation.  We define four special $3$-pyramid free 2-trees, $H_1,H_2,H_3,$ and $H_4$ as follows.  $V(H_i)=\{v,u,w,x,y\},$ and $E(H_i)=\{vu,vw,vx,vy,uw,wx,xy\},1\le i\le4$.  All the edges incident on $v$ are blue for each $H_i,1\le i\le4$.  Additionaly, the edge $uw$ is blue in $H_2$, the edge $xy$ is blue in $H_3$, and the edges $uw,xy$ are blue in $H_4$.  Note that each $H_i, 1\le i\le4$ is a $G^0$ for some $G$.
\begin{lemma} \label{lem6}
If $G$ has a Hamiltonian path, and  $\Delta (B(G^{0}))=4$, then $G^0\in \mathcal{H}=\{H_1,H_2,H_3,H_4\}$. 
\end{lemma}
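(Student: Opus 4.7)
Let $v$ be a vertex of $G^0$ with $d_{B(G^0)}(v)=4$. The plan is to use the Hamiltonian-path assumption on $G$, together with Lemma \ref{lem9} and Theorem \ref{thm3}, to force $G^0$ onto exactly five vertices with the edge set of some $H_i$.

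First I would pin down all four blue edges incident to $v$. By Lemma \ref{lem9}, $d_{B(G^1)}(v)\le 2$, so at least two of the four blue edges at $v$ in $G^0$ must disappear when passing to $G^1=G^0-\{s,t\}$; since $v$ has at most one edge into each of $s,t$, this already forces $vs,vt$ to exist in $G^0$ and both to be blue. Writing $N_{G^0}(s)=\{v,w\}$ and $N_{G^0}(t)=\{v,x\}$, the fact that the iterative pruning halted without deleting $s$ or $t$ forces $close(s)=vw$ and $close(t)=vx$ to have been colored blue at termination. Together with $vs,vt$ these exhaust the four blue edges at $v$, and in particular $s$ and $t$ share $v$ as a common neighbor.

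Next I would apply Theorem \ref{thm3} to obtain a Hamiltonian $(w,x)$-path $R$ in $G^1$ containing every blue edge of $G^1$. In $G^1$, the blue edges at $v$ are exactly $vw$ and $vx$ (the other two having vanished with $s,t$), so $R$ must traverse both; but since $R$ starts at $w$, ends at $x$, and visits $v$ only once, this pins $R$ down to the length-two path $w-v-x$. Hence $|V(G^1)|=3$ and $V(G^0)=\{v,s,w,x,t\}$.

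Finally I would reconstruct the edges of $G^0$. The edges $vs,vw,vx,vt,sw,xt$ come directly from the neighborhood data. A $2$-tree on five vertices has $2\cdot 5-3=7$ edges and neither $s$ nor $t$ can accept a further edge (they have degree $2$), so the missing seventh edge must be $wx$, matching the $H_i$ edge set exactly. I would then rule out $wx$ being blue by observing that otherwise the path $R=w-v-x$ would fail to contain it, contradicting Theorem \ref{thm3}. The only freedom left is whether $sw$ and $xt$ are blue, giving the four cases $H_1$ (neither), $H_2$ ($sw$ only), $H_3$ ($xt$ only), and $H_4$ (both). The main obstacle I expect is the very first step: extracting cleanly, from the pruning rule together with $d_{B(G^1)}(v)\le 2$, the fact that the four blue edges at $v$ are precisely $\{vs,vw,vx,vt\}$; once this identification is locked in, the short-path argument from Theorem \ref{thm3} and the edge count handle the rest almost mechanically.
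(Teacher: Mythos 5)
Your proposal is correct in substance but takes a genuinely different route from the paper's. The paper argues by contradiction and pure structure: assuming $G^0=H_0\notin\mathcal{H}$, it disposes of $|V(H_0)|=5$ by the cut-set count $c(G-\{v,w,x\})>4$ against Lemma \ref{lem2}, and for $|V(H_0)|>5$ it analyzes the paths $P_{uw},P_{wx},P_{xy}$ in $H_0-\{v\}$, rules out each candidate extra adjacency via $4$-pyramids or further component counts (Lemma \ref{lem1}), and only at the very end invokes Lemma \ref{lem5}.(i) and Lemma \ref{lem9} to reach $d_{B(G^1)}(v)=3$, a contradiction. You instead run forward: the blue-degree drop from $d_{B(G^0)}(v)=4$ to $d_{B(G^1)}(v)\le2$ forces $vs,vt$ to be blue, the termination condition of vertex pruning makes $close(s)=vw$ and $close(t)=vx$ blue, and Theorem \ref{thm3}'s all-blue-edges Hamiltonian path collapses to $(w,v,x)$, pinning $|V(G^0)|=5$, after which the $2n-3=7$ edge count of a 2-tree finishes the structure and the blue/non-blue bookkeeping yields exactly the four cases $H_1,\dots,H_4$. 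Your version is shorter and constructive, buying brevity by leaning on the already-proved machinery of Theorem \ref{thm3}; the paper's version stays closer to elementary cut-set obstructions and makes explicit why no extra vertices $u_i$ can exist, at the cost of a longer case analysis.

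Two small points you should patch. First, your claim that $vs,vt,vw,vx$ exhaust the four blue edges at $v$ silently assumes $w\neq x$ (and $s\not\sim t$); both degeneracies are easy to kill --- for instance, if $w=x$ then $s$ and $t$ are two common neighbours of the edge $vw$, and since $d_{B(G^0)}(v)=4$ forces $|V(G^0)|\ge5$, the edge $vw$ lies in a triangle of $G^1=G^0-\{s,t\}$, producing a third common neighbour and hence a $3$-pyramid in $G^0$, which is impossible --- but a line to this effect is needed, since otherwise only three of the four blue edges at $v$ are accounted for. Second, the type definitions preceding Theorem \ref{thm3} tacitly treat $N_{G^0}(s)$ and $N_{G^0}(t)$ as disjoint, whereas in your configuration they share $v$; to justify that the guaranteed path has endpoints exactly $w$ and $x$, you should observe that $d_{B(G^1)}(v)=2$ excludes types 2 and 4, and (via Lemma \ref{lem8}) the remaining neighbours $w,x$ have blue degree $1$, so the path is of type 1 or 3 with the stated endpoints --- which is precisely what your collapse argument requires.
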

\begin{proof}
Assume for a contradiction that there exist $H_0\notin \mathcal{H}$.  If $|V(H_0)|=5$, then the structure of $H_0$ is similar to $H_1$ and the only difference is the edge $wx$ is blue in $H_0$.  Then note that $c(G-\{v,w,x\})>4$ and by Lemma \ref{lem2}, $G$ has no Hamiltonian path, a contradiction.  Therefore, $|V(H_0)|>5$.  Let $V(H_0)=\{v,u,w,x,y,u_1,\ldots,u_k\}, k\ge1$, $\{vu,vw,vx,vy\}\subset E(H_0)$, and the edges $\{vu,vw,vx,vy\}$ are blue.  Clearly, there exist paths $P_{uw},P_{wx},P_{xy}$ in $H_0-\{v\}$.  If $P_{wx}$ is not an edge in $H_0-\{v\}$ or $wx$ is a blue edge, then $c(G-\{v,w,x\})>4$, again a contradiction.  Therefore, $P_{wx}$ must be an edge and $wx$ is not blue.  Now we shall see the adjacency of vertices $u_i, 1\le i\le k$.  Clearly, there is no $u_i, 1\le i\le k$ such that $\{v,w\}\subset N_{H_0}(u_i)$ or $\{v,x\}\subset N_{H_0}(u_i)$ or $\{w,x\}\subset N_{H_0}(u_i)$.  Existence of such $u_i$ yields $4$-pyramid in the former and $c(G-\{v,w,x\})>4$ in the later, a contradiction.  If $\{v,y\}\subset N_{H_0}(u_i)$ or $\{v,u\}\subset N_{H_0}(u_i)$, then either $c(G-\{v,y,w\})>4$ or $c(G-\{v,u,x\})>4$, a contradiction.  To complete the proof we shall focus on $P_{uw}$.  Let $P_{uw}=(u=w_1,\ldots,w_k=w), k\ge2$.  Note that $vw_i$ is not blue for $2\le i\le k-1$.  
Further, $d_{H_0}(u)\ge3$ and by Lemma \ref{lem5}.(i), there exist exactly two vertices of degree 2 in $H_0$.
Let $z,z'\in V(H_0)$ such that $d_{H_0}(z)=d_{H_0}(z')=2$.  
Note that the 2-tree $G^1$ obtained from $H_0$ on removing $z,z'$ has $d_{B(G^1)}(v)=3$, a contradiction to Lemma \ref{lem9}.  Symmetric argument holds for the path $P_{xy}$, and this completes a proof.
 $\hfill\qed$   
\end{proof}
\subsection{Yet Another Simplification (Edge Pruning)} \label{secg2}
In Section \ref{secg0} we have introduced first level pruning with the help of coloring and labeling of edges.  This helps to record the pruned vertices and further we obtained nice structural results on the blue graph.  It is natural to ask whether the existence of Hamiltonian path in $G$ is guaranteed (necessary and sufficient condition) using the Hamiltonian path containing all blue edges of $G^0$.  Surprisingly, the answer is no.  However, using the second level pruning presented in Section \ref{secg1}, we can guarantee a Hamiltonian path in $G$ using a Hamiltonian path containing blue edges.  Having highlighted this, it is natural to prune unnecessary (not part of any Hamiltonian path) non-blue edges from $G^0$ ($G^1$), and this is the objective of this section.  \\\\
With the definition of $G,G^0,G^1$ as before we shall introduce the following notations with respect to $G^1$. 
We work with a unique PEO $(v_{1},v_{2},\ldots,v_{k})$ of $G^{1}$ such that $d_{G^1}(v_1)=d_{G^1}(v_k)=2$.
\begin{itemize}
		\item \emph{Separator edges} $E_{s}$ = $\{e=uv : c(G^1-\{u,v\})>1\}$
		\item \emph{Non-separator edges} $E_{ns}$ = $E(G^{1})$ $\backslash E_{s}$.
	\item The \emph{left non-separator edge} of a vertex $v_{j}$ with $d_{G^{1}}(v_{j})>2$ is \emph{left($v_{j}$)}$=v_{i}v_{j}$ such that $i<j$ and $v_{i}v_{j}\in E_{ns}$.
	\item The \emph{right non-separator edge} of a vertex $v_{j}$ with $d_{G^{1}}(v_{j})>2$ is \emph{right($v_{j}$)}$=v_{j}v_{k}$ such that $j<k$ and $v_{j}v_{k}\in E_{ns}$. 
	\item \emph{Star vertices }$V_{s}$ = $\{v\in V(G^{1})$ such that $d_{G^{1}}(v)\geq 5\}$
% 	\item For $v_{i}\in V_s$, \emph{Star-Neighbor}$(v_{i})$ = $v_{j}$ such that $v_{j}\in V_s$ and $v_{i}v_{j}\in E_{s}$, $i<j$. 
	\item A \emph{forced star} refers to a star vertex with the blue left non-separator edge. 
	If $v_i\in V_s$ is a forced star, then $v_j\in N_{G^1}(v_i)\cap V_s$ such that $v_{i}v_{j}\in E_{s}$, $i<j$ is also a forced star. 
	\item A \emph{double forced star} refers to a forced star vertex with the blue right non-separator edge.
	\item For a blue separator edge $uv_{j}$ incident on a star vertex $u$, we define \emph{left separator edge}, \emph{left($uv_{j}$)}=$uv_{h}$ such that there is no $uv_{i}, h<i<j$ where $uv_{h}$, $uv_{i}\in E_{s}$.
	\item Similarly, \emph{right separator edge}, \emph{right($uv_{j}$)}=$uv_{h}$ such that there is no $uv_{i}, h>i>j$ where $uv_{h}$, $uv_{i}\in E_{s}$.
\end{itemize}
\begin{figure}[h!]
\begin{center}
	\includegraphics[scale=1]{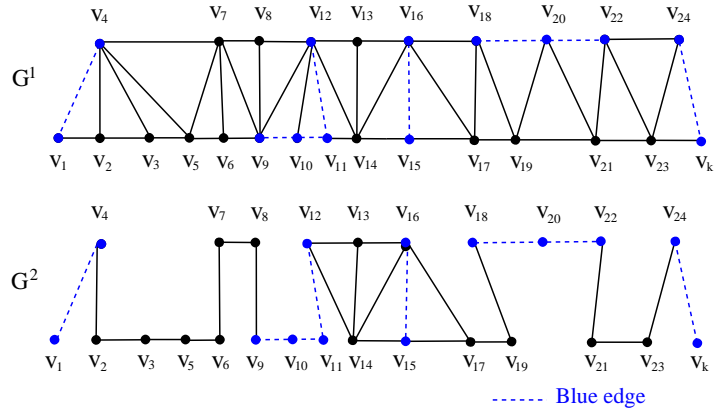}  
\caption{An illustration for edge pruning}
\label{g1}
\end{center}
\end{figure} 
With reference to Figure \ref{g1}, the left and right separator edges of the blue edge $v_{12}v_{11}$ of $G^{1}$ are $v_{12}v_{10}$ and $v_{12}v_{14}$, respectively.  The left and right non-separator edges of a star vertex $v_{12}$ are $v_{12}v_{8}$ and $v_{12}v_{13}$, respectively. 
\begin{obs}
For each $v\in V_{s}$, there exists at least three separator edges incident on $v$, and for each $u\in V(G^{1})$, there exist exactly two non-separator edges incident on $u$.
\end{obs}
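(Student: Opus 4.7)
The plan is to prove that for every $v \in V(G^1)$ the subgraph $G^1[N_{G^1}(v)]$ induced on the neighbourhood of $v$ is a path, and then read off both halves of the observation from the leaf/internal-vertex structure of that path.

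I would first translate the separator/non-separator distinction into neighbourhood structure via Lemma \ref{lem1}. For any $uv \in E(G^1)$, $c(G^1 - \{u,v\}) = |N_{G^1}(u) \cap N_{G^1}(v)|$, so $uv \in E_s$ iff $u$ and $v$ have at least two common neighbours in $G^1$, and $uv \in E_{ns}$ iff they have exactly one (every edge in a 2-tree on at least three vertices lies in at least one triangle, so this count is always $\geq 1$). Equivalently, viewing the neighbours of $v$ inside $G^1[N_{G^1}(v)]$, a neighbour $u$ contributes a non-separator edge to $v$ exactly when $u$ is a leaf there, and a separator edge exactly when $u$ is an internal vertex.

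Next I would show that $G^1[N_{G^1}(v)]$ is a path in three short sub-steps. (i) Connectedness: a routine induction along a PEO of the 2-tree $G^1$. When a simplicial vertex $z$ with $N(z) = \{a,b\}$ is appended, $z$ enters $G[N(a)]$ as a new vertex attached to the already-present neighbour $b$ of $a$, preserving connectedness, and symmetrically at $b$. (ii) Acyclicity: a triangle inside $G^1[N_{G^1}(v)]$ together with $v$ would give $K_4$ in $G^1$, and a shortest (hence induced) cycle of length $\geq 4$ inside $G^1[N_{G^1}(v)]$ would itself be an induced $C_{\geq 4}$ of $G^1$; both configurations are forbidden by Observation \ref{obs1}. (iii) Maximum degree at most $2$: $G^0$ is $3$-pyramid free and $G^1 = G^0 - \{s,t\}$ is an induced subgraph of $G^0$, hence $G^1$ is also $3$-pyramid free; by Lemma \ref{lem1} every edge $uv$ of $G^1$ then satisfies $|N_{G^1}(u)\cap N_{G^1}(v)| \leq 2$, so every vertex in $G^1[N_{G^1}(v)]$ has degree at most $2$ there. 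A connected acyclic graph of maximum degree at most $2$ is a path.

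With this path structure in hand the observation is immediate. Every path has exactly two leaves, so each $u \in V(G^1)$ has exactly two non-separator edges; and for $v \in V_s$ with $d_{G^1}(v) \geq 5$ the path $G^1[N_{G^1}(v)]$ has at least five vertices, hence at least three internal vertices, yielding at least three separator edges incident on $v$. The only step that is not essentially a counting argument is the acyclicity in (ii), and I expect that to be the main technical point, but it follows directly from the forbidden-subgraph characterisation in Observation \ref{obs1}.
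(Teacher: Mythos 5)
Your proof is correct, but note first that the paper itself offers no proof of this observation --- it is stated as an unproved remark --- so the comparison is necessarily against the argument the authors implicitly had in mind. That implicit route is shorter and goes through machinery already established: by Lemma \ref{lem1}, $uv\in E_{ns}$ iff $|N_{G^1}(u)\cap N_{G^1}(v)|=1$, and since $G^1$ is a $3$-pyramid free 2-tree, the sufficiency part of Theorem \ref{thm1} states precisely that the edge set $\{uv : |N_{G^1}(u)\cap N_{G^1}(v)|=1\}$ forms the unique Hamiltonian cycle of $G^1$; a Hamiltonian cycle uses exactly two edges at every vertex, which gives the second half of the observation, and a star vertex of degree at least $5$ then has at least $5-2=3$ remaining incident edges, all separator edges, which gives the first half. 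Your route --- showing $G^1[N_{G^1}(v)]$ is a path and identifying non-separator edges incident on $v$ with leaves of that path --- is a genuinely different, purely local argument, and it is sound: the leaf/internal dichotomy is exactly right because the degree of $u$ in $G^1[N_{G^1}(v)]$ equals $|N_{G^1}(u)\cap N_{G^1}(v)|$; connectedness by PEO induction is standard; and acyclicity follows from Observation \ref{obs1} as you say. One attribution slip: in your step (iii) the bound $|N_{G^1}(u)\cap N_{G^1}(v)|\le 2$ is not a consequence of Lemma \ref{lem1}, which only converts common-neighbour counts into component counts (the translation you correctly use earlier); the bound comes from $3$-pyramid freeness directly --- three common neighbours of an edge are pairwise non-adjacent (an edge among them would create a $K_4$, forbidden by Observation \ref{obs1}), so together with $u,v$ they would induce a $3$-pyramid. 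With that citation repaired, your proof is complete (the degenerate cases are harmless, since the paper's standing assumption that $G^1$ admits a PEO with two degree-$2$ endpoints forces $|V(G^1)|\ge 3$ and hence minimum degree $2$). In exchange for its three sub-steps, your approach buys strictly more information --- every neighbourhood in $G^1$ induces a path --- whereas the Hamiltonian-cycle route obtains just the stated edge counts in one line.
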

As mentioned before, the objective of this section is to prune unnecessary non-blue edges in $G^1$ and towards this end, we define five sets of edges, $E_1,E_2,E_3,E_4,$ and $E_5$ (defined in Table \ref{tabedgesets}) whose removal from $G^1$ yields the graph $G^2$.  Since $E_5$ can not be empty, $G^2$ need not be a 2-tree.  In this section, we do not work with arbitrary $G^1$, instead, we work with $G^1$ satisfying the following conditions to obtain $G^2$.
\begin{enumerate}
\item  $\Delta (B(G^{1})) \leq 2$. 
\item For every vertex $s\in V(G^0)$ such that $d_{G^0}(s)=2$, $N_{G^0}(s)=\{u,v\}$, one of the following should hold
\begin{itemize}
\item $d_{B(G^1)}(u)=1$,  $d_{B(G^1)}(v)=2$ 
\item $d_{B(G^1)}(u)=2$,  $d_{B(G^1)}(v)=1$
\item $d_{B(G^1)}(u)=d_{B(G^1)}(v)=1$
\end{itemize}
\end{enumerate}
Note that this is precisely the conclusions of Lemmas \ref{lem9} and \ref{lem8}.  
\begin{table} [h!]
% \hspace*{-20pt}
\begin{tabular}{|c|l|l|}
\hline
 Set & Definition & Intuitive justification $\backslash$ example \\
& & (see Figure \ref{g1}) \\\hline
$E_1$ & $\{yz : yz$ is not blue and $d_{B(G^{1})}(y)=2$ or $d_{B(G^{1})}(z)=2$ $\}$   & If there exist $d_{B(G^{1})}(y)=2$, then\\
&																																			& non-blue edges incident on $y$ are\\ 
&																																			 & not part of any $(v_1,v_k)$-Hamiltonian \\
&																																			 & path.  \\  
&																																			 & $E_1=\{v_{10}v_{12},v_{11}v_{14},v_{19}v_{20},v_{20}v_{21}\}$\\\hline
% &																																			 & $E_1=\{hj,il,rq,rs\}$\\\hline
$E_2$ &  For $yv_j\in E_s$ and $yv_j$ is blue, $E_{a} = \{yv_{l} : l<j-1$ & $(v_1,v_k)$-Hamiltonian path either\\
&  or $l>j+1$ and $yv_{l}\in E_{s}$ and $yv_l$ is not blue $\}$.  & follows $(\ldots,v_{j-1},y,v_j,\ldots)$\\
& $E_{b}=\{left(y) : yv_{i}\in E_{a}, i<j-1$ and $left(y)$ is not blue$\}$   		&or $(\ldots,v_{j},y,v_{j+1},\ldots)$\\
& $E_{c}=\{right(y) : yv_{i}\in E_{a}, i>j+1$ and $right(y)$ is not blue$\}$  & $E_2=\{v_4v_3,v_4v_5,v_4v_7,v_{12}v_9,v_{12}v_8\}$ \\
& $E_2= E_{a} \cup E_b\cup E_c$  & \\\hline
% $E_2=\{vb,vc,ve,gj,jf\}$ \\\hline   %%% $C_s$ represents the component con
$E_3$ & $E_{3}= \{ yz : $ there exist a maximal path $P_{v_{i}v_{j}}$ in $B(G^{1})$  such that &$(v_1,v_k)$-Hamiltonian path does not \\
& $v_pv_q\in (E(P_{v_{i}v_{j}})\cap E_s)\neq\emptyset$ and $|V(P_{v_{i}v_{j}})|>2$, $v_i\in C_u$ or $v_j\in C_x$ in  & contain $right(v_{i})$ or $left(v_{j})$ \\
& $G^1-\{v_p,v_q\}$ and $yz=right(v_{i})$ or $yz=left(v_{j})$ and $yz$ is not blue $\}$	 & $E_3=\{v_{12}v_8\}$\\\hline
$E_4$ &  Let $W_{1}$ and $W_{2}$ be two $(v_1,v_k)$-vertex disjoint paths in $G^{1}$, and $P_{v_iv_j}$ be a & $(v_1,v_k)$-Hamiltonian path either\\
&  maximal path in $B(G^{1})$ such that $E(P_{v_iv_j})\cap E_s=\emptyset$, $V(P_{v_iv_j})\cap V(W_{2})=\emptyset$, & follows $(\ldots,v_{p},P_{v_iv_j},\ldots)$\\
& $|V(P_{v_iv_j})|>2$ and $(N_{G^{1}}(v_i) \cap N_{G^{1}}(v_j)) \cap V(W_{2})=\emptyset$.  & or $(\ldots,P_{v_iv_j},v_q,\ldots)$  \\
& $E'_{ij}=$ $\{v_{p}v_q : p<q$, $v_{p}v_{q} \in E(W_{2})$, $v_{p}\in N_{G^{1}}(v_i)$, $v_{q}\notin N_{G^{1}}(v_i)$,  and $v_pv_q$ & $E_4=\{v_{19}v_{21}\}$\\
& is not blue $\}$ & \\
&$E''_{ij}= $ $\{v_{p}v_{q} : p<q$, $v_{p}v_{q}\in E(W_{2})$, $v_{q}\in N_{G^{1}}(v_j)$, $v_{p}\notin N_{G^{1}}(v_j)$, and $v_pv_q$ & \\
& is not blue $\}$  & \\
&$E_{4}= \bigcup\limits_{\forall i,j} E'_{ij}\cup \bigcup\limits_{\forall i,j}E''_{ij}$ &\\\hline
%%%$E_5$ &  $E_5=\{e_1,\ldots,e_k\}$, where $e_i$ satisfies the property $\pi$ in  & $E_5=\{v_1v_2,v_9v_7,v_9v_6,v_7v_5,v_{18}v_{16},$\\
% $E_5=\{ua,eg,gd,ec,np,po,tw,ty,yx\}$\\
%%%& $G^*=\widehat{G}-\{e_1,\ldots,e_{i-1}\}$, where $\widehat{G}=G^1-\bigcup\limits_{i=1}^{4}E_i$.  A non-blue edge & $v_{18}v_{17},v_{22}v_{23},v_{22}v_{24},v_{23}v_k\}$\\
%%% &  $e_i=pq$ in a block $D$ of $G^*$ is said to satisfy the property $\pi$ if any & \\
%%% & one of the following holds: & \\
%%% & (i) $p$ is either $u$ or $x$ or a cut vertex in $G^*$ and there exist a blue edge $pr$ in $D$. & \\
%%% & (ii) $p$ is either $u$ or $x$ or a cut vertex in $G^*$ and $pr\in E(G^*)$ where $d_{G^*}(r)=2$.  &\\\hline
$E_5$ &  $E_5=\{e_1,\ldots,e_k\}$, where $e_i$ satisfies the property $\pi$ in $G^*=\widehat{G}-\{e_1,\ldots,e_{i-1}\}$, & $E_5=\{v_1v_2,v_9v_7,v_9v_6,v_7v_5,v_{18}v_{16},$\\
&  where $\widehat{G}=G^1-\bigcup\limits_{i=1}^{4}E_i$.  A non-blue edge  $e_i=pq$ in a block $D$ of $G^*$ is said to & $v_{18}v_{17},v_{22}v_{23},v_{22}v_{24},v_{23}v_k\}$\\
& satisfy the property $\pi$ if any one of the following holds: & \\
& (i) $\{p,r\}=N_{G^0}(z)$ where $d_{G^0}(z)=2$, $d_{B(G^1)}(r)=2$ and the vertices $p,r (\neq q)$ & \\
& are in block $D$. & \\
& (ii) $\{p,r\}=N_{G^0}(z)$ where $d_{G^0}(z)=2$, $d_{B(G^1)}(r)=d_{B(G^1)}(p)=1$,  $d_{G^1}(p)=2$ & \\
&  and the vertices $p,r (\neq q)$ are in block $D$. & \\
& (iii) $p$ is a cut vertex in $G^*$ and $pr\in E(G^*)$ where $d_{G^*}(r)=2$, and the vertices & \\
&  $p,r (\neq q)$ are in block $D$. & \\
& (iv) $p$ is a cut vertex in $G^*$ and $pr\in E(G^*)$ where $pr$ is a blue edge in $D$. & \\
& Further $r\neq q$& \\
\hline
\end{tabular}\\
\caption{Edges $E_1,\ldots,E_5$ definition and examples}
\label{tabedgesets}
\end{table}
The results presented in this section are based on such restricted $G^1$ and its corresponding $G^2$. 
An example illustrating the transformation is given in Figure \ref{g1}.  In Theorem \ref{thm4}, we establish a structural relation between $G^{1}$ and $G^{2}$.
%%%%%%%%%%%%%%%%%%%%%%%%%%%     THEOREM 4 %%%%%%%%%%%%%%%%%%%%%%%%%%%%%%%%%%%%%%%%%%%%%
\begin{theorem} \label{thm4}
There exist a type 1 or type 2 or type 3 or type 4 $(u,x)$-Hamiltonian path in $G^{1}$ containing all blue edges if and only if  there exist a $(u,x)$-Hamiltonian path in $G^{2}$ containing all blue edges. 
\end{theorem}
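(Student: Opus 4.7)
The plan is to handle the two directions separately, with most of the work going into the forward implication. For the reverse direction (from $G^{2}$ to $G^{1}$), note that $G^{2}$ is obtained from $G^{1}$ by deleting edges, so any $(u,x)$-Hamiltonian path in $G^{2}$ containing all blue edges is automatically a $(u,x)$-Hamiltonian path in $G^{1}$ containing all blue edges. The type label depends only on the degrees of $u,v,w,x$ in $B(G^{1})$ and $G^{1}$, which are intrinsic to $G^{1}$; under the standing hypotheses (the conclusions of Lemmas \ref{lem9} and \ref{lem8}) at least one of the four type conditions holds, so the path qualifies as type 1, 2, 3 or 4. Thus this direction reduces to the subgraph observation together with the preconditions imposed on $G^{1}$.

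For the forward direction, let $P$ be a type $i$ $(u,x)$-Hamiltonian path in $G^{1}$ that contains every blue edge. I would prove edge-set by edge-set that $P$ avoids $E_{1}\cup E_{2}\cup E_{3}\cup E_{4}\cup E_{5}$; once established, $P$ survives as a $(u,x)$-Hamiltonian path in $G^{2}$. For $E_{1}$ the argument is immediate: if $d_{B(G^{1})}(y)=2$, then both blue edges at $y$ must lie in $P$ and consume both of $y$'s path-slots, forbidding any non-blue edge at $y$. For $E_{2}$, fix a blue separator edge $yv_{j}$; since $P$ uses $yv_{j}$ and the two sides of $G^{1}-\{y,v_{j}\}$ must be linked by $P$ through $y$, the local pattern around $y$ is forced to be either $(\ldots,v_{j-1},y,v_{j},\ldots)$ or $(\ldots,v_{j},y,v_{j+1},\ldots)$, and the edges in $E_{a},E_{b},E_{c}$ are exactly the non-blue separator / left / right edges ruled out by both configurations. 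For $E_{3}$ and $E_{4}$, a maximal blue path must be traversed as a single contiguous sub-path by $P$; combining this with the separator structure (respectively with the pair of vertex-disjoint $(v_{1},v_{k})$-paths $W_{1},W_{2}$) pins down the two allowed ways for $P$ to exit the blue path, and the edges in $E_{3}$, $E'_{ij}$, $E''_{ij}$ are precisely those non-blue edges incompatible with both exits.

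The main obstacle will be $E_{5}$, because its definition is iterative: each $e_{i}$ is chosen with property $\pi$ inside the graph $G^{*}=\widehat{G}-\{e_{1},\ldots,e_{i-1}\}$, so there is no fixed static description of the set. I would handle it by induction on $i$, carrying the invariant that $P$ remains a $(u,x)$-Hamiltonian path of $G^{*}$ containing all blue edges. At each step, property $\pi$ identifies an edge $e_{i}=pq$ inside a block $D$ of $G^{*}$, and a short case analysis on clauses (i)--(iv) shows that $P$ cannot contain $pq$: in (i) and (ii) the pair $\{p,r\}=N_{G^{0}}(z)$ forces both path-slots at $p$ to be used by the edges reaching $r$ and by either the unique blue edge or the unique remaining non-blue edge at $p$ (using $d_{G^{1}}(p)=2$ or $d_{B(G^{1})}(r)=2$); in (iii) the cut-vertex condition combined with $d_{G^{*}}(r)=2$ forces $P$ to use $pr$ and the unique other edge at $p$ inside $D$; and in (iv) the blue edge $pr$ is already mandatory at the cut vertex $p$. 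The subtle point, and where the inductive invariant is used, is that these ``forcings'' continue to hold after the prior deletions $e_{1},\ldots,e_{i-1}$; once that is verified, stringing the five reductions together yields $E(P)\subseteq E(G^{2})$ and completes the theorem.
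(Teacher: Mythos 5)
Your proposal is correct and follows essentially the same route as the paper: the reverse direction via the subgraph observation (no blue edges are pruned), and the forward direction by showing a type $1$--$4$ Hamiltonian path $P_{ux}$ avoids each of $E_1,\ldots,E_5$ through the same per-set case analysis (degree-two blue saturation for $E_1$, the forced local pattern at a blue separator edge for $E_2$, contiguity of maximal blue paths for $E_3$ and $E_4$, and the clause-by-clause analysis of property $\pi$ for $E_5$). Your explicit induction on $i$ for the iteratively defined $E_5$ is a welcome sharpening of what the paper's Case 5 leaves implicit (the paper argues in $G^*$ assuming the path survived the prior deletions), and the only minor imprecision is in clauses (i)--(ii), where the paper's actual point is that $p\in\{u,x\}$ is an \emph{endpoint} of $P_{ux}$, so the mandatory blue edge $pr$ already fills its single slot.
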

\begin{proof}  
The sufficiency is immediate as none of the blue edges are pruned for obtaining $G^2$.
For necessity; let $P_{ux}$ is $(u,x)$-Hamiltonian path of type $i, 1\le i\le3$ in $G^1$.
We shall now show that $P_{ux}$ is indeed $(u,x)$-Hamiltonian path in $G^2$ containing 
all blue edges.  The idea is to show that none of the edges in $E_1,\ldots,E_5$ is part of $P_{ux}$ in $G^1$.
For a contradiction, assume $P_{ux}$ has some edges from $E_1\cup\cdots\cup E_5$.
\begin{itemize}
\item
\emph{Case 1:}
$P_{ux}$ contains an edge $zl\in E_1$ in addition to the blue edges $hz,zj$ incident on $z$.  This shows that the path $P_{ux}$ has 3 edges sharing the vertex $z$ in common, a contradiction.
\item
\emph{Case 2:} 
$P_{ux}$ contains an edge $v_iv_l\in E_2$ in addition to the blue edge $v_iv_j\in E_s$.  As per the definition of $E_2$, $v_iv_{j-1}, v_iv_{j+1}\notin E_2$.  This implies that $l\ge j+2$ and $P_{ux}$ is of the form $(u,\ldots,v_j,v_i,v_l,\ldots,x)$ or  $l\le j-2$ and $P_{ux}$ is of the form $(u,\ldots,v_l,v_i,v_j,\ldots,x)$
as it has to contain the edge $v_iv_l$.  However, the vertex $v_{j+1}$ in the former or $v_{j-1}$ in the later is unvisited in $P_{ux}$, contradicting the fact that $P_{ux}$ is a Hamiltonian path. 
\item
\emph{Case 3:}  
Assume $P_{ux}$ contains an edge $v_jv_l\in E_3$.  Clearly, $P_{ux}$ contains $P_{v_iv_j}=(v_i,\ldots,v_{j'},v_j)$ (see $E_3$ in Table \ref{tabedgesets} for the definition of $P_{v_iv_j}$) as a sub path.  Further $P_{ux}$ contains the blue edge $v_jv_{j'}\in E_s$.  
Observe that $P_{ux}$ is of the form $(u,\ldots,P_{v_iv_j},v_l,\ldots,x)$ or $(u,\ldots,v_l,P_{v_jv_i},\ldots,x)$.  This implies that, $P_{v_lx}$ or $P_{v_ix}$ must contain one of  $\{v_j,v_{j'}\}$.  From the above argument it follows that $v_j$ or $v_{j'}$ appears more than once in $P_{ux}$, contradiction to the definition of Hamiltonian path.  A symmetric argument holds true for the other edges in $E_3$.
\item
\emph{Case 4:}
$P_{ux}$ contains an edge $v_pv_q\in E_4$.  Clearly, $P_{ux}$ contains $P_{v_iv_j}$ (see $E_4$ in Table \ref{tabedgesets} for the definition of $P_{v_iv_j}$) as a sub path. 
If $v_pv_q\in E'_{ij}$, then $P_{xu}$ is of the form $(x,\ldots,P_{v_jv_i},v_p,v_q,\ldots,u)$ or $(x,\ldots,v_q,v_p,P_{v_iv_j},\ldots,u)$.  This implies that $P_{v_qu}$ or $P_{v_ju}$ must contain one of  $\{v_i,v_{p}\}$.  From the above argument it follows that $v_i$ or $v_{p}$ appears more than once in $P_{xu}$, a contradiction.  
If $v_pv_q\in E''_{ij}$, then $P_{ux}$ is of the form $(u,\ldots,P_{v_iv_j},v_q,v_p,\ldots,x)$ or $(u,\ldots,v_p,v_q,P_{v_jv_i},\ldots,x)$.  This implies that $P_{v_px}$ or $P_{v_ix}$ must contain one of  $\{v_j,v_{q}\}$.  From the above argument it follows that $v_j$ or $v_{q}$ appears more than once in $P_{ux}$, a contradiction.  
\item
\emph{Case 5:}  
$P_{ux}$ contains an edge $pq\in E_5$.  We present case analysis and arrive at a contradiction in each of them.  See $E_5$ in Table \ref{tabedgesets} for conditions mentioned below. \\
\emph{case a:} Condition (i) or (ii) holds.  Note that $p$ is either $u$ or $x$ and clearly $pr$ is a blue edge.  It follows that $u$ or $x$ has two edges incident to it in $P_{ux}$.  However, any $(u,x)$-Hamiltonian path can not have two edges incident on $u$ or $x$ which are the end vertices of $P_{ux}$, a contradiction.\\
\emph{case b:} Condition (iii) holds.  Note that $P_{ux}$ is of the form $(u,\ldots,z,p,q,\ldots,x)$ or $(u,\ldots,q,p,z,\ldots,x)$ where $z$ is not in $D$.  Since $d_{G^*}(r)=2$, and $P_{ux}$ contains all the vertices in $G^1$, both the edges incident on $r$ are in $P_{ux}$.  In particular, $pr\in E(P_{ux})$, and therefore, there exist three edges in $P_{ux}$ which are incident on  the vertex $p$, a contradiction.\\
\emph{case c:} Condition (iv) holds.  Note that $P_{ux}$ is of the form $(u,\ldots,z,p,q,\ldots,x)$ or $(u,\ldots,q,p,z,\ldots,x)$ where $z$  is not in $D$.  Since $pr$ is a blue edge, and $P_{ux}$ contains all the blue edges, the vertex $p$ has three edges incident on it in $P_{ux}$, a contradiction.
$\hfill \qed$  \end{itemize} 
\end{proof}
It is important to highlight that the three transformations ($G\rightarrow G^0\rightarrow G^1\rightarrow G^2$) discussed so far results in a graph with all the blue edges and relatively a few non-blue edges.  It is now appropriate to identify yes instances of $G^2$ ($G^2$ with Hamiltonian paths containing all the blue edges).  Towards this end, we present two structural observations and using which we can characterize all $G^2$ having Hamiltonian paths containing all the blue edges.  Further, it helps in identifying Hamiltonian paths in $G$ as well. \\\\
%%%%%%%%%%%%%%%%%%%%%%%%%%%   CONFLICTING PATH   %%%%%%%%%%%%%%%%%%%%%%%%%%%%%%%%%%%%%%
\textbf{Conflicting Paths}\\
Let $P_{v_{i}v_{j}}$ and $P_{v_{y}v_{z}}$ be two maximal vertex disjoint paths having no separator edges in $B(G^{2})$, the graph induced on the blue edges of $G^2$.
Note that the indices of the vertices represents their ordering in $\sigma$.  The paths $P_{v_{i}v_{j}}$, $|P_{v_{i}v_{j}}|\ge2$ and $P_{v_{y}v_{z}}$, $|P_{v_{y}v_{z}}|\ge3$ are said to be \emph{conflicting} if $y<i<z$ and $v_{i}v_{z}\notin E(G^{2})$.  
%%%%%%%%%%%%%%%%%%%%%%%%%%%     LEMMA 10 %%%%%%%%%%%%%%%%%%%%%%%%%%%%%%%%%%%%%%%%%%%%%
\newpage
\begin{lemma} \label{lem10}
If $G$ has a Hamiltonian path, then $G^{2}$ has no conflicting paths. 
\end{lemma}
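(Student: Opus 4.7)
The plan is to argue by contradiction, leveraging the characterization established in Theorems~\ref{thm3} and \ref{thm4}. Assume $G$ has a Hamiltonian path and yet $P_{v_iv_j}$ and $P_{v_yv_z}$ are conflicting blue paths in $G^{2}$. By the two theorems, $G^{2}$ admits a $(u,x)$-Hamiltonian path $P$ containing every blue edge of $G^{2}$. Since both $P_{v_iv_j}$ and $P_{v_yv_z}$ are maximal blue paths, each must appear as a consecutive subpath of $P$; otherwise some internal blue edge would be missed.

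First, I would pin down how $P$ interacts with the chain $P_{v_yv_z}$. Because each vertex of $G^{1}$ has exactly two non-separator edges (by the observation following the definition of $V_s$), every internal vertex $v_m$ of $P_{v_yv_z}$ has its two non-separator edges consumed by the chain; every other edge at $v_m$ in $G^{2}$ is a separator edge. Hence $P$ can only enter and leave the chain through the endpoints $v_y$ and $v_z$. Fix an orientation so that $P$ traverses the chain in the order $(v_y,\ldots,v_z)$; the case $(v_z,\ldots,v_y)$ is symmetric. Consequently, every vertex outside $V(P_{v_yv_z})$ is visited by $P$ either strictly before $v_y$ or strictly after $v_z$.

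Next, I would locate $v_i$. Since $P_{v_iv_j}$ is also a consecutive subpath of $P$, in $P$ the vertex $v_i$ has one neighbor along the blue chain $P_{v_iv_j}$ and at most one other neighbor (a non-blue edge of $G^{2}$, or none at all if $v_i \in \{u,x\}$). The hypothesis $v_iv_z \notin E(G^{2})$ immediately forbids $P$ from using the transition $(\ldots,v_z,v_i,\ldots)$ directly. The heart of the argument is then to show that $v_i$ cannot be placed on either side of the chain without forcing a contradiction: (a) if $v_i$ lies on the $v_z$-side of $P$, use the PEO indexing $y<i<z$ together with the boundary structure of a $2$-tree (non-separator edges of $G^{1}$ form the outer face of the triangulated fan) to conclude that any path in $G^{2}$ from $v_z$ to $v_i$ avoiding $V(P_{v_yv_z})$ must pass through a vertex of index $>z$ and then reenter the ``pocket'' bounded by $P_{v_yv_z}$ through an edge incident to an internal vertex of $P_{v_yv_z}$---which we ruled out; (b) if $v_i$ lies on the $v_y$-side of $P$, then the consecutive blue subpath $P_{v_iv_j}$ is wholly on that side, and maximality of $P_{v_yv_z}$ plus the index ordering force $P_{v_iv_j}$ to enclose the chain, so $P$ must reenter the $v_y$-side after exiting at $v_z$, contradicting that $P$ is a simple path.

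The main obstacle will be step~(a): rigorously translating the inequality $y<i<z$ on PEO indices into the geometric statement that $v_i$ lies in the ``pocket'' separated from the $v_z$-side by the chain $P_{v_yv_z}$, and combining this with the fact that edge-pruning has already removed (via $E_1$--$E_5$) those non-blue shortcuts that might otherwise allow $P$ to bypass the chain. Once that geometric separation is justified from the definition of non-separator edges and the PEO, the missing edge $v_iv_z$ and the maximality/consecutiveness of the two blue paths combine to eliminate every remaining placement of $v_i$, delivering the desired contradiction.
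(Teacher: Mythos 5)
Your skeleton coincides with the paper's: assume a conflict, invoke Theorems~\ref{thm3} and \ref{thm4} to obtain a $(u,x)$-Hamiltonian path $P_{ux}$ of $G^{2}$ containing every blue edge, note that each maximal blue path must then occur as a consecutive subpath of $P_{ux}$, and derive a contradiction from the order in which the two chains are traversed. Up to there you are aligned with the paper. The genuine gap is exactly the step you yourself flag as ``the main obstacle'': the separation claim in your case~(a) is never established, and without it neither branch of your dichotomy closes. The paper dispatches this point in one line: whichever chain is traversed first, $P_{ux}$ has the form $(u,\ldots,P_{v_yv_z},\ldots,P_{v_jv_i},\ldots,x)$ or $(u,\ldots,P_{v_iv_j},\ldots,P_{v_zv_y},\ldots,x)$, and the remaining suffix --- $(v_i,\ldots,x)$ in the first case, $(v_y,\ldots,x)$ in the second --- must contain a vertex of $\{v_j,v_z\}$, which is then visited twice. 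This is where the hypotheses $y<i<z$ and $v_iv_z\notin E(G^{2})$ do their work: the spent chain separates the region containing $v_i$ from the rest, and the missing edge $v_iv_z$ removes the only direct bypass. Your proposal defers precisely this to an unproven ``pocket'' geometry, so as written it is a plan for a proof rather than a proof; the consecutiveness observation you do justify is the easy half, shared with the paper.

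Two further inaccuracies in the deferred part. First, $G^{1}$ is not in general a ``triangulated fan''; being a $3$-pyramid free 2-tree it is a maximal outerplanar graph whose non-separator edges form its unique Hamiltonian cycle (this follows from Theorem~\ref{thm1} together with Lemma~\ref{lem1}), and the outer-boundary picture you want should be derived from those statements rather than asserted. Second, in your case~(b) the claim that maximality of $P_{v_yv_z}$ ``forces $P_{v_iv_j}$ to enclose the chain'' does not follow from anything you set up: maximality of a blue path constrains only blue edges and says nothing about how the non-blue edges of $G^{2}$ route around the chain. Both sides of your case analysis thus rest on the same missing separation argument, which is the actual mathematical content of the lemma.
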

\begin{proof}
 Assume for a contradiction that there exist a pair of conflicting blue paths $P_{v_{y}v_{z}}$ and $P_{v_{i}v_{j}}$ in $G^{2}$ such that $y<i<z$ in $\sigma$.  From Theorems \ref{thm3} and \ref{thm4}, note that there exists a $(u,x)$-Hamiltonian path $P_{ux}$ in $G^2$ containing all the blue edges.  The path $P_{ux}$ is either of the form $(u,\ldots,P_{v_yv_z},\ldots,P_{v_jv_i},\ldots,x)$ or $(u,\ldots,P_{v_iv_j},\ldots,P_{v_zv_y},\ldots,x)$.  Since the sub paths $(v_i,\ldots,x)$ and $(v_y,\ldots,x)$ contains a vertex in $\{v_j,v_{z}\}$, either $v_j$ or $v_z$ is visited twice in $P_{ux}$, a contradiction.  $\hfill\qed$
\end{proof}
%%%%%%%%%%%%%%%%%%%%%%%%%%%     OBS 4  %%%%%%%%%%%%%%%%%%%%%%%%%%%%%%%%%%%%%%%%%%%%%
\begin{obs} \label{obs4}
From the definition of $G^1$, there is a blue edge incident on $u$ in $G^1$ and from the definition of $G^2$, $d_{B(G^2)}(u)=1$.  Further, all the non-blue edges incident on $u$ are not present in $G^2$.  It follows that, $d_{G^2}(u)=1$.  Similarly, $d_{G^2}(x)=1$.
\end{obs}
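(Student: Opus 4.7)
The plan is to verify three component assertions about $u$: that $u$ has a blue neighbor in $G^1$, that $d_{B(G^2)}(u)=1$, and that every non-blue edge at $u$ is deleted in passing from $G^1$ to $G^2$; together these give $d_{G^2}(u)=1$, and the same argument applied to the 2-degree vertex $t$ and its neighbors $\{x,w\}$ yields $d_{G^2}(x)=1$. For the first assertion I would appeal directly to the vertex-pruning rule that produces $G^0$: a 2-degree vertex is pruned precisely when its close edge is uncolored. Since $s \in V(G^0)$ with $d_{G^0}(s)=2$ and $N_{G^0}(s)=\{u,v\}$ survives the pruning, $close(s)=uv$ must already have been colored blue, so $uv$ is a blue edge of $G^0$ and persists into $G^1 = G^0-\{s,t\}$.

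For $d_{B(G^2)}(u)=1$, I would first observe that each of $E_1,\ldots,E_5$ is defined only on non-blue edges, as a quick inspection of Table \ref{tabedgesets} confirms (e.g.\ $E_1$ is explicitly stipulated non-blue, and the $E_5$ property $\pi$ requires the edge in question to be non-blue). Hence the blue sub-graph is invariant under the $G^1\to G^2$ pruning. Combining this with the standing hypothesis of Section \ref{secg2} (the conclusions of Lemmas \ref{lem9} and \ref{lem8}) and with the labeling convention of Section \ref{secg1} that designates $u,x$ as Hamiltonian-path endpoints, giving $d_{B(G^1)}(u)=d_{B(G^1)}(x)=1$, we obtain $d_{B(G^2)}(u)=1$. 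The crucial piece is then the removal of every non-blue edge at $u$: for any non-blue $uq\in E(G^1)$, Lemma \ref{lem8} leaves two subcases for the pair $\{u,v\}$. If $d_{B(G^1)}(v)=2$, condition~(i) of property $\pi$ in the definition of $E_5$ fires with $p=u$, $r=v$, $z=s$; if $d_{B(G^1)}(u)=d_{B(G^1)}(v)=1$ with $d_{G^1}(u)=2$ (as forced by the Type 2 and Type 4 templates of Section \ref{secg1}), condition~(ii) fires. In both subcases the side requirement is that $p=u$ and $r=v$ coexist in the block of $G^* = G^1\setminus(E_1\cup\cdots\cup E_4)$ that contains $uq$; this holds because the blue edge $uv$ is never a candidate for any $E_i$, so $uv$ survives into $G^*$ and glues $u$ and $v$ into the same block as $uq$.

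The main obstacle is precisely the block-tracking step just described: one must verify that after removing $E_1\cup\cdots\cup E_4$ the triangle carrying $uq$ still shares a block with $v$, so that property $\pi$ of $E_5$ actually activates. Once this bookkeeping is in place, we conclude $uq\in E_5$ and hence $uq\notin E(G^2)$, leaving $uv$ as the only edge of $G^2$ incident on $u$, so $d_{G^2}(u)=1$. Repeating the identical argument with $t$, $x$, $w$ in place of $s$, $u$, $v$ yields $d_{G^2}(x)=1$.
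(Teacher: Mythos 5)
Your overall route is the same as the paper's: the paper states Observation \ref{obs4} without a separate proof, treating it as an unfolding of the definitions, and your reconstruction follows exactly that unfolding --- the pruning rule for $G^0$ forces $close(s)=uv$ to be blue, the endpoint-labeling convention behind the four path types gives $d_{B(G^1)}(u)=d_{B(G^1)}(x)=1$, the fact that every $E_i$ in Table \ref{tabedgesets} consists only of non-blue edges makes $B(G^2)=B(G^1)$, and conditions (i)/(ii) of property $\pi$ in the definition of $E_5$ (aligned with the case split of Lemma \ref{lem8}) are indeed the mechanism intended to sweep away the non-blue edges at $u$ and $x$. You also correctly identified the one step that is not definitional: whether $\pi$ actually fires for every surviving non-blue edge $uq$.

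That step is where your argument has a genuine gap. You justify it by saying that since $uv$ is blue it survives into $G^*$ and therefore ``glues $u$ and $v$ into the same block as $uq$.'' Survival of the edge $uv$ only guarantees that $u$ and $v$ lie together in the block \emph{containing $uv$}; it does not place $v$ in the block $D$ containing the candidate edge $uq$, and conditions (i) and (ii) of $\pi$ explicitly require $p=u$ and $r=v$ to lie in that block $D$. After deleting $E_1\cup\cdots\cup E_4$ (and, because $E_5$ is defined iteratively on $G^*=\widehat{G}-\{e_1,\ldots,e_{i-1}\}$, after each earlier $E_5$ deletion as well), $u$ may become a cut vertex separating $q$ from $v$; in the extreme case $uq$ is a bridge whose block is just $\{u,q\}$, and then none of $\pi$(i)--(iv) applies --- (i)/(ii) fail since $v\notin D$, (iii) fails since $D$ contains no $r\neq q$, and (iv) fails since the blue edge $uv$ is not in $D$. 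So as written your argument does not rule out a non-blue edge at $u$ persisting into $G^2$. Closing the gap requires an actual argument that each surviving non-blue edge at $u$ lies on a common cycle with $uv$ in the current $G^*$ --- for instance, exploiting that in the 2-tree $G^1$ the neighborhood $N_{G^1}(u)$ induces a tree (since $K_4$ and induced $C_{\ge4}$ are forbidden by Observation \ref{obs1}), so $q$ and $v$ are joined by a path inside $N_{G^1}(u)$, and then verifying that the pruning rules cannot delete all of the edges realizing such a connection before $uq$ is processed --- and this block-invariance must be re-established after each iterative $E_5$ deletion, not just once.
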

%%%%%%%%%%%%%%%%%%%%%%%%%%%     LEMMA 10 %%%%%%%%%%%%%%%%%%%%%%%%%%%%%%%%%%%%%%%%%%%%%
\begin{lemma} \label{lem11}
Let $U=u_1,\ldots,u_i$ be the cut vertices of $G^2$.  
If $G$ has a Hamiltonian path, then for each block $D$ of $G^{2}$ \\
(i) for every $u_j,j<i$ in $D$, $c(G^2-u_j)=2$.\\
(ii) $|D\cap U|\le 2$
\end{lemma}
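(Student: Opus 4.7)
The plan is to translate the hypothesis that $G$ has a Hamiltonian path into a statement about $G^2$ via Theorems \ref{thm3} and \ref{thm4}, and then combine the necessary condition of Lemma \ref{lem2} with a block-level analysis of the induced path. From the assumption, Theorems \ref{thm3} and \ref{thm4} yield a $(u,x)$-Hamiltonian path $P$ in $G^2$ containing all blue edges; in particular $G^2$ itself has a Hamiltonian path, and by Observation \ref{obs4} the endpoints $u$ and $x$ are leaves of $G^2$, so neither is a cut vertex. Both parts will be consequences of these two facts.

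For part (i), I would apply Lemma \ref{lem2} to $G^2$ with the singleton $S=\{u_j\}$, obtaining $c(G^2-u_j)\le |S|+1 = 2$. Since $u_j$ is a cut vertex of $G^2$, we also have $c(G^2-u_j)\ge 2$ by definition of a cut vertex, and the two inequalities together force $c(G^2-u_j)=2$. I expect this part to be essentially a one-line substitution into the necessary condition.

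For part (ii), I would argue by contradiction. Assume a block $D$ contains three distinct cut vertices $u_a,u_b,u_c$. For each $j\in\{a,b,c\}$, part (i) gives $c(G^2-u_j)=2$; since $D-u_j$ is connected (as $D$ is $2$-connected) and still contains the other two cut vertices of $D$ together with whatever is attached outside $D$ through them, one of those two components contains all of $D-u_j$, and the remaining component is a single nonempty ``pendant'' $T_j$ whose only attachment to the rest of $G^2$ is through $u_j$. I then examine how $P$ traverses each $T_j$: because $T_j$ is connected to the rest of $G^2$ only through $u_j$, once $P$ enters $T_j$ via $u_j$ it cannot return, and symmetrically once $P$ leaves $T_j$ via $u_j$ it cannot re-enter. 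Hence for each nonempty $T_j$ an endpoint of $P$ must lie in $T_j$; but $P$ has only two endpoints, so three pendants $T_a, T_b, T_c$ cannot each host one, a contradiction. The corner case where some $u_j$ might itself be an endpoint of $P$ is ruled out by Observation \ref{obs4}, since the endpoints $u,x$ have degree $1$ in $G^2$ and therefore cannot be cut vertices.

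The main obstacle I anticipate is the careful bookkeeping in part (ii): first, leveraging part (i) to conclude that each pendant $T_j$ is really a single connected piece attached only through $u_j$ (as opposed to several pieces, which the raw block-cut picture would allow), and then formalizing the ``enter once, never return'' argument for $P$'s traversal of each pendant. Thus part (i) is used as a lemma supplying the structural fact that each cut vertex has a unique outside component, which in turn drives the endpoint-counting argument of part (ii).
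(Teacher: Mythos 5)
Your proof is correct and follows essentially the same route as the paper: part (i) is Lemma \ref{lem2} applied to $G^2$, whose Hamiltonian path is supplied by Theorems \ref{thm3} and \ref{thm4}, and part (ii) combines Observation \ref{obs4} with the fact that each cut vertex of $D$ carries a pendant component reachable only through that vertex. Your endpoint-counting formulation (three pendants but only two path endpoints) is just a cleaner rendering of the paper's observation that $u_2$ would have to be visited twice on the subpath between $u_1$ and $u_3$.
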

\begin{proof}
Suppose there exist a vertex $u_j,j<i$ such that $c(G^2-u_j)>2$, then this contradicts Lemma \ref{lem2}.  
Therefore, for every $u_j$ in $D$, $c(G^2-u_j)=2$.  Suppose for a contradiction, the vertices $\{u_1,\ldots,u_j, j\ge3\}$ are present in  $D$.  Since $G$ has a Hamiltonian path, from Theorems \ref{thm3} and \ref{thm4}, there exists a $(u,x)$-Hamiltonian path $P_{ux}$ in $G^2$ containing all the blue edges.  
From Observation \ref{obs4}, $u\notin D$.  Clearly, $P_{ux}$ is of the form $(u,\ldots,u_1,\ldots,u_2,\ldots,u_3,\ldots,x)$, where $u_1$ is the first vertex of $D$ visited in $P_{ux}$, which is also in $U$. 
From the above observation note that the internal vertices of the sub path $P_{uu_1}$ are not in $D$, and the vertex $u_2$ is visited twice in the sub path $P_{u_1u_3}$, which is a contradiction.
$\hfill\qed$
\end{proof}
%%%%%%%%%%%%%%%%%%%%%%%%%%%     THEOREM 5 %%%%%%%%%%%%%%%%%%%%%%%%%%%%%%%%%%%%%%%%%%%%%
\begin{theorem} \label{thm5}
Let $U$ be the set of cut vertices in $G^2$.  There exist a $(u,x)$-Hamiltonian path containing all the blue edges in $G^{2}$ if and only if $G^2$ is connected and the following holds:\\
(i) For every $u_j\in U$, $c(G^2-u_j)=2$ and for each block $D$ of $G^{2}$, $|D\cap U|\le 2$.\\
(ii) $G^{2}$ has no double forced stars.  \\
(iii) $G^{2}$ has no conflicting paths.
\end{theorem}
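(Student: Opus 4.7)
\textbf{Proof Plan for Theorem \ref{thm5}.}

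\emph{Necessity.} Two of the three conditions have effectively been proved already: condition (i) is exactly Lemma \ref{lem11}, and condition (iii) is Lemma \ref{lem10}. So the only fresh work is condition (ii). The plan is to assume for contradiction that $v$ is a double forced star in $G^2$ and derive a contradiction with the existence of a $(u,x)$-Hamiltonian path $P_{ux}$ in $G^2$ that contains every blue edge. By definition, both $left(v)$ and $right(v)$ are blue, and by Lemma \ref{lem9} these are the only two blue edges incident on $v$. Hence $P_{ux}$ must traverse $v$ as $(\ldots,a,v,b,\ldots)$ where $\{va,vb\}=\{left(v),right(v)\}$, so no separator edge incident on $v$ can appear in $P_{ux}$. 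Since $v \in V_s$ has at least three separator edges, I would look at the subgraphs growing out of $v$ across each separator edge $vv_c$ (using that $c(G^1-\{v,v_c\})>1$ for every such edge) and use the propagation clause of the forced-star definition together with conditions (i) on $G^2$ to show that one of these subgraphs cannot be entered by $P_{ux}$ without revisiting $v$ or violating the degree-two property of $P_{ux}$ at some cut vertex.

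\emph{Sufficiency.} Here the plan is constructive. First, by Observation \ref{obs4}, $d_{G^2}(u)=d_{G^2}(x)=1$, so $u$ and $x$ each lie in a unique block. Condition (i) forces the block-cut tree of $G^2$ to be a path $D_1-u_1-D_2-u_2-\cdots-u_{m-1}-D_m$ with $u\in D_1$, $x\in D_m$. Any Hamiltonian path in $G^2$ must be the concatenation, along this path of blocks, of Hamiltonian paths in each $D_i$ between the two "gateway" vertices $u_{i-1},u_i$ (with $u_0:=u$, $u_m:=x$). So it suffices to show that within each block $D$ between its two distinguished boundary vertices, there is a Hamiltonian path containing all blue edges of $D$. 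I would prove this by induction on $|V(D)|$: using the PEO $\sigma$ of $G^1$ restricted to $D$, peel off the simplicial vertex of smallest index and argue that the remaining block still satisfies the three conditions. The role of condition (ii) is to ensure that whenever the peeling reaches a star vertex, we always have at least one non-blue non-separator edge available to absorb a detour; the role of condition (iii) is to force the maximal blue sub-paths inside $D$ to appear in a linearly consistent order along $\sigma$, which is exactly what a Hamiltonian path needs in order to visit them one after another.

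\emph{Main obstacle.} The hard part is the inductive step inside a single block, specifically handling the interaction between star vertices and blue paths. When the block contains a star vertex $v$ whose left non-separator edge is blue (a forced star), we have no freedom on one side of $v$; the edge-pruning rules $E_2,E_3,E_4$ were set up precisely so that the remaining non-blue edges at $v$ lead to a unique feasible way to thread the path through $v$, and the non-existence of a double forced star guarantees that $right(v)$ is non-blue so this thread can actually escape. Showing that these local choices glue together globally without conflict is where Lemma \ref{lem10} (no conflicting paths) is used: it rules out precisely the obstruction where two maximal blue sub-paths force incompatible orderings across the PEO. I expect the cleanest way to organize the induction is to keep as invariant that after removing simplicial vertices one by one (re-inserting them greedily into the path), the current graph still has a block-path structure, no double forced stars, and no conflicting paths, so the hypothesis carries through.
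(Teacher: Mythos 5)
Your necessity argument is essentially the paper's: conditions (i) and (iii) are cited from Lemmas \ref{lem11} and \ref{lem10}, and for (ii) you correctly observe that a double forced star $v$ has exactly two blue edges (by Lemma \ref{lem9}), which must then be the two path edges at $v$, so no separator edge at $v$ appears in $P_{ux}$. The paper finishes this by pinning down the concrete contradiction: the interior fan neighbors $w_2,\ldots,w_{i-1}$ of the double forced star then remain unvisited. Your sketch gestures at "some subgraph cannot be entered," which is the right idea but stops short of identifying these vertices; that last step is routine to fill in. Your sufficiency skeleton also matches the paper: Observation \ref{obs4} puts $u$ and $x$ in end blocks, condition (i) forces the block--cut structure to be a chain, and the problem reduces to finding, inside each block $D$, a Hamiltonian path between its two gateway cut vertices containing all blue edges of $D$, by induction on $|V(D)|$.

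The genuine gap is in your inductive step inside a block. You propose to peel off the simplicial vertex of smallest index in the PEO $\sigma$ restricted to $D$ and assert the three conditions persist. But $G^2$ is not a 2-tree --- the paper notes explicitly that $E_5$ is never empty, so edges have been deleted --- and a block of $G^2$ need not contain a simplicial vertex at all: edge deletions destroy the clique neighborhoods that $\sigma$ certifies in $G^1$. Even where a degree-two vertex exists, it may be incident to blue edges, and deleting it is not solution-preserving since the path is required to contain those edges; your plan of "re-inserting greedily" is precisely the point needing proof, and your invariant (block-path structure, no double forced stars, no conflicting paths after peeling) is asserted rather than argued --- you yourself flag this as the main obstacle without resolving it. The paper's induction sidesteps both difficulties by working at the gateway cut vertex $p$ instead of at a simplicial vertex: it first proves $p$ is incident to at most one blue edge (two blue edges would give $d_{B(G^1)}(p)=2$, putting every non-blue edge $pi$ with $i\notin D$ into $E_1$ and contradicting that $p$ is a cut vertex), then retains exactly one edge at $p$ (the blue one if it exists, otherwise the least-indexed neighbor in $\sigma$) and prunes the rest, follows the resulting forced degree-two chain $p=w_1,\ldots,w_k$ up to the first vertex of degree at least $3$, and recurses from $w_k$; the absence of double forced stars and conflicting paths is what guarantees this pruning never strands a blue edge. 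To repair your version you would have to replace the simplicial-vertex peeling by this forced-chain advance at the gateway vertex, which is where the real work of the sufficiency proof lives.
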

\begin{figure}[h!]
\begin{center}
			\includegraphics[scale = 1.2]{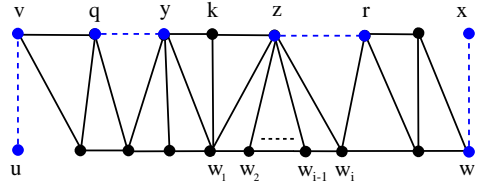}%
\end{center}
\caption{An illustration for the proof of Theorem \ref{thm5}}%
\label{figdf}%
\end{figure}
\begin{proof}
\emph{Necessity:} 
From Theorems \ref{thm3} and \ref{thm4}, we know that there exist a $(u,x)$-Hamiltonian path containing all the blue edges in $G^{2}$.  Further, using Lemma \ref{lem11}, condition (i) follows and using Lemma \ref{lem10}, condition (iii) follows.   %, (ii) follows.  
 For (ii), we assume on the contrary that there exist a double forced star in $G^{2}$.  Let $y$ be a forced star and $z$ be a double forced star as shown in Figure \ref{figdf} such that $q<y\le z<r$.  Since the vertex $z$ is a star vertex, in the neighborhood of $z$, apart from $k,r$, there exist vertices $w_1,\ldots,w_i,i\ge3$.  Since $G^2$ is connected, $d_{G^2}(w_{j})=3,1<j<i$.  Given that in $G^2$ there exist a $(u,x)$-Hamiltonian path $P_{ux}$ containing all the blue edges, the blue edges $qy,zr$ are in $P_{ux}$.  Clearly, $P_{ux}$ is of the form $(u,\ldots,q,y,\ldots,z,r,\ldots,x)$.  However, the vertices $\{w_2,\ldots,w_{i-1}\},i\ge3$ are not visited in $P_{ux}$, a contradiction to the fact that $P_{ux}$ is a Hamiltonian path in $G^2$.
\\\\\emph{Sufficiency:}  
Since $G^{2}$ is connected, from condition (i), there exist blocks $D_1,\ldots,D_k, k\ge2$ such that adjacent blocks $D_{j-1},D_j$ share a vertex $u_j\in U$, $1<j\le k$.  It follows from Observation \ref{obs4}, $u\in D_1, x\in D_k$ and by condition (i) every $D_j,2\le j\le k-1$, $|D_j\cap U|=2$.  
 To show that there exists a $(u,x)$-Hamiltonian path containing all the blue edges in $G^{2}$, it is sufficient to show that for every block $D_j,2\le j\le k-1$, there exists an $(p,q)$-Hamiltonian path containing all the blue edges of $D_j$, where $p,q\in D_j\cap U$.  We prove this claim using induction on the number of vertices of the block $D=D_j,2\le j\le k-1$.\\  
\emph{Base case:} $|D|=2$.  Clearly, $D$ contains the edge $pq$ with no double forced stars and no conflicting paths.  The edge $pq$ itself is an $(p,q)$-Hamiltonian path in $D$. \\ 
\emph{Induction hypothesis:}  Let $D$ be block with no double forced stars and no conflicting paths on less than $i, i\geq 3$ vertices and $p,q\in D\cap U$.  Assume that there exist an $(p,q)$-Hamiltonian path in $D$ containing all the blue edges. \\ 
\emph{Induction Step:} Let $D$ be a block in $G^{2}$ on $i$ vertices, $i\geq 3$ such that $D$ has no double forced stars and no conflicting paths.  Let $p,q\in D\cap U$.  We now claim that there exist at most one blue edge incident on $p$.  Suppose there are at least two blue edges incident on $p$.  Since $p$ is a cut vertex, there exist an edge $pi$, $i\notin D$.  Since $\Delta_{B(G^1)}\le2$, $pi$ is not blue, and therefore, by the definition of $E_1$, $pi\in E_1$.  A contradiction to the fact that $p$ is a cut vertex.  We shall complete the inductive proof by considering the following cases.  \\
\emph{case 1:}   There exist a neighbor $j$ of $p$ such that $pj$ is blue and for all other neighbors $l$, $pl$ is not blue.  \\
\emph{case 2:}   All neighbors of $p$ are not blue. \\
To get the inductive sub problem, we prune the edges based on the above cases.  For case 1, we prune all the edges $pl$ in $D$ except the edge $pj$.  For case 2, let $j$ be the least indexed neighbor such that $j<l$ in $\sigma$, where $l$ is any other neighbor of $p$.  In this case we retain $pj$ and prune all such $pl$.  Clearly, in either case analysis, the resultant graph is free from conflicting paths and double forced stars.  Note the degree of $p$ in the resultant graph is two.  Let $j=w_1,\ldots,w_k,k\ge2$ such that $w_i,w_{i+1}$ are adjacent and $w_k$ be the least indexed vertex such that degree of $w_k$ is at least 3.  Clearly, using the base case we know that there exist $(w_i,w_{i+1})$-Hamiltonian path, $1\le i\le k-1$ and by the induction hypothesis, we get $(w_k,q)$-Hamiltonian path containing all the blue edges.  Further, using these paths we get $(p,q)$-Hamiltonian path containing all the blue edges.  The induction is complete and the theorem follows.  $\hfill\qed$
\end{proof} 
%%%%%%%%%%%%%%%%%%%%%%%%%%%     THEOREM 6 %%%%%%%%%%%%%%%%%%%%%%%%%%%%%%%%%%%%%%%%%%%%%
\begin{theorem} \label{thm6}
A 2-tree $G$ contains a Hamiltonian path if and only if any one among the following holds:
\begin{enumerate}
	\item $G$ is $3$-pyramid free.
	\item $G$ is $4$-pyramid free and contains exactly one $3$-pyramid.
	\item $G$ is $4$-pyramid free and $G$ contains at least two $3$-pyramids and
				\begin{enumerate}
				\item $G^{2}$ is connected and for every $u_j\in U$, the set of cut vertices in $G^2$, $c(G^2-u_j)=2$ and for each block $D$ of $G^{2}$, $|D\cap U|\le 2$.
				\item $G^{2}$ is connected and $G^2$ has no double forced stars.  
				\item $G^{2}$ is connected and $G^2$ has no conflicting paths.				
				\end{enumerate}
\end{enumerate}
\end{theorem}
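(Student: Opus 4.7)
The plan is to prove Theorem \ref{thm6} by composing the earlier results: this theorem is a synthesis of the structural characterization developed in Sections \ref{secg0}, \ref{secg1}, and \ref{secg2}, so the proof is primarily a bookkeeping argument that verifies the prerequisites at each step. I would split into sufficiency and necessity, and handle the three cases of the disjunction separately in each direction.

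For sufficiency, I would treat each case independently. In case 1, when $G$ is $3$-pyramid free, Theorem \ref{thm1} supplies a Hamiltonian cycle of $G$, from which deleting any edge yields a Hamiltonian path. In case 2, when $G$ is $4$-pyramid free and contains exactly one $3$-pyramid, Lemma \ref{lem4} directly gives a Hamiltonian path. In case 3, when $G$ has at least two $3$-pyramids and the conditions (a), (b), (c) on $G^2$ hold, I would apply the sufficiency direction of Theorem \ref{thm5} to produce a $(u,x)$-Hamiltonian path in $G^2$ containing all blue edges, then use the sufficiency direction of Theorem \ref{thm4} to lift this to a Type 1--4 $(u,x)$-Hamiltonian path in $G^1$ containing all blue edges of $G^1$, and finally use the sufficiency direction of Theorem \ref{thm3} to lift it further to a Hamiltonian path in $G$.

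For necessity, assume $G$ has a Hamiltonian path. First, by Lemma \ref{lem3}, $G$ must be $4$-pyramid free; so $G$ falls into one of three mutually exclusive sub-cases: $3$-pyramid free (case 1), contains exactly one $3$-pyramid (case 2), or contains at least two $3$-pyramids. In the last sub-case, I need to verify that conditions (a)--(c) of case 3 hold. The chain of implications is the delicate part: Lemma \ref{lem5} ensures $G^0$ satisfies the three hypotheses required for defining $G^1$; Lemmas \ref{lem9} and \ref{lem8} then ensure $G^1$ satisfies the two hypotheses required for defining $G^2$, so $G^2$ is well-defined. The necessity directions of Theorems \ref{thm3} and \ref{thm4} guarantee that $G^2$ has a $(u,x)$-Hamiltonian path containing all blue edges, and then the necessity direction of Theorem \ref{thm5} yields the connectivity of $G^2$ together with conditions (i)--(iii), which are exactly (a)--(c).

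The main obstacle will be verifying that the prerequisites for each invoked theorem are satisfied along the reduction chain $G \to G^0 \to G^1 \to G^2$, since each transformation is defined only when the previous level meets certain structural requirements. No fresh combinatorial argument should be needed; however, care must be taken to confirm that the existence of a Hamiltonian path in $G$ propagates downward (so every intermediate graph in the chain is well-defined and satisfies the needed degree/blue-graph bounds), and that the Hamiltonian path in $G^2$ propagates upward through the labels attached to blue edges to yield a Hamiltonian path in $G$. Once this chain of reductions is cleanly stated, the theorem follows immediately as a composition of Theorems \ref{thm1}, \ref{thm3}, \ref{thm4}, \ref{thm5} and Lemmas \ref{lem3}, \ref{lem4}, \ref{lem5}, \ref{lem8}, \ref{lem9}.
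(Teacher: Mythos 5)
Your proposal is correct and follows essentially the same route as the paper, which proves necessity from Lemma \ref{lem3} together with Theorems \ref{thm3}, \ref{thm4}, \ref{thm5}, and sufficiency by composing Theorems \ref{thm1}, \ref{thm3}, \ref{thm4}, \ref{thm5} and Lemma \ref{lem4}. Your explicit verification that Lemmas \ref{lem5}, \ref{lem8}, and \ref{lem9} supply the hypotheses making $G^1$ and $G^2$ well-defined is merely a more careful spelling-out of what the paper's terse two-line proof leaves implicit.
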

\begin{proof}
\emph{Necessity:} Since $G$ contains a Hamiltonian path, by Lemma \ref{lem3}, $G$ is $4$-pyramid free.  If $G$ is also $3$-pyramid free or contains exactly one $3$-pyramid, then the necessity follows for \emph{1} and \emph{2}.  If $G$ contains at least two $3$-pyramids, then using Theorems \ref{thm3}, \ref{thm4} and \ref{thm5}, the claim \emph{3} follows. This completes the necessity.  
\emph{Sufficiency:} 
From Theorems \ref{thm1}, \ref{thm3}, \ref{thm4}, \ref{thm5} and Lemma \ref{lem4}, $G$ contains a Hamiltonian path.  
This completes the sufficiency, and a proof of the theorem. $\hfill \qed$ 
\end{proof}
\newpage
\subsection{A polynomial-time algorithm to find a Hamiltonian path in a 2-tree}
%%%%%%%%%%%%%%%%%%%%%%%  ALGORITHM   %%%%%%%%%%%%%%%%%%%%%%%
\begin{algorithm}[h] 
\caption{Hamiltonian path in 2-trees:  {\em Hamiltonian\_Path(2-Tree G)}}
\label{Hamiltonianpath}
%\caption{Hamiltonian path in 2-tree: \emph{ Hamiltonian Path(2-Tree G)}}
\begin{algorithmic}[1]
\STATE{If $G$ is $3$-pyramid free, then there exist a Hamiltonian path containing only non-separator edges.}
\STATE{If $G$ is $4$-pyramid free with exactly one $3$-pyramid, then starting from an appropriate vertex of the $3$-pyramid, there exist a Hamiltonian path containing only non-separator edges.}
\STATE{For $G$ with $4$-pyramid free with at least two $3$-pyramids, perform vertex pruning as defined in Section \ref{secg0} to get $G^0$.}
\STATE{If $G^0$ has at least three vertices of degree 2 or the maximum degree of $B(G^0)$ is at least 5, then print G has no Hamiltonian path.  Also check if there are exactly two vertices of degree 2, then its neighborhood contains at most one vertex of degree three in $B(G^0)$.}
\STATE{Find $G^1$ from $G^0$ followed by Sets $E_1$ to $E_5$ as defined in Section \ref{secg2} and prune them from $G^1$ to get $G^2$.}
\STATE{If the maximum degree of $B(G^1)$ is at least 3, then print G has no Hamiltonian path.  Also check for degree two vertex, its neighborhood contains at most one vertex of degree two in $B(G^1)$.}
\STATE{ If $G^{2}$ is not connected \textbf{or} has a cut vertex, $v$ such that $c(G^{2}-v)\geq 3$ \textbf{or} 
 there exist a block $D$ of $G^{2}$ that has more than two cut vertices in $G^{2}$ \textbf{or} 
 $G^{2}$ has a double forced stars \textbf{or} there exist conflicting paths in $G^{2}$, then print $G$ has no Hamiltonian path.}
\STATE{For each block $D$, find a spanning path in $G^2$ using Algorithm \ref{spanningpath} containing all the blue edges and expand the blue edges using the labels, and output the expanded path as a Hamiltonian path in $G$.}
\end{algorithmic}
\end{algorithm}
%%%%%%%%%%%%%%%%%%%%%%%   END  ALGORITHM   %%%%%%%%%%%%%%%%%%%%%%%
%
%%%%%%%%%%%%%%%%%%%%%%%  SPANNING PATH ALGORITHM   %%%%%%%%%%%%%%%%%%%%%%%
\begin{algorithm}[h!]
\caption{Spanning Path: \emph{Spanning\_Path($D$,$p$,$q$)}}
\label{spanningpath}
\begin{algorithmic}[1]
\STATE{If $D$ induces an edge then return the edge which is the trivial Hamiltonian path.}
\STATE{If there exist a blue edge $pj$ incident on $p$.  All other non-blue edges incident on $p$ are pruned.  }
\STATE{If none of the edges incident on $p$ are blue, then find the least indexed vertex $j$ in $\sigma$ among the neighbors of $p$.  Retain the edge $pj$ and prune all other edges incident on $p$.}
\STATE{After pruning, find the least indexed vertex $w_k$, in $p=w_1,\ldots,w_k,k\ge2$ such that $w_i,w_{i+1}$ are adjacent and degree of $w_k$ is at least 3.  Recursively find Spanning\_Path($D-\{w_1,\ldots,w_{k-1}\},w_k,q$).}
\end{algorithmic}
\end{algorithm}
%%%%%%%%%%%%%%%%%%%%%%%  END  SPANNING PATH ALGORITHM   %%%%%%%%%%%%%%%%%%%%%%%
\subsection{Proof of correctness and Run-time analysis}
Step 1,2,4 are correct due to Theorem 2, Lemmas 4 and 5.  Correctness of Step 6 is from Lemmas 6,7, and Step 7 is due to Theorems 4, 5, and 6.  Step 8 correctly produces the Hamiltonian path due to the Theorem \ref{thm5} and Algorithm \ref{spanningpath} is an implementation of the constructive proof mentioned in Theorem \ref{thm5}.  
\\\\ Using the standard Depth First Search tree, we can get the sets $E_1$ to $E_5$ and also the cut vertices and blocks of $G^2$.  A careful fine tuning of the DFS tree helps us to get spanning paths mentioned in Step 8 of our algorithm.  Therefore, the overall effort to output a Hamiltonian path is linear in the input size. 
\section*{Conclusions and Future Work}
In this paper, we have characterized the class of 2-trees having Hamiltonian paths.  Further, using our combinatorics, we have presented a polynomial-time algorithm to find Hamiltonian paths in 2-trees.  We believe that combinatorics presented here can be used in other combinatorial problems restricted to 2-trees.  A natural extension of our work is to look at an algorithm for finding Hamiltonian paths in $k$-trees, $k\geq 3$.  Also, algorithms presented here output just one Hamiltonian path if it exists in a given 2-tree.  A related problem is to generate all Hamiltonian paths in a given 2-tree.
\nocite{*}
\bibliographystyle{splncs2}
\bibliography{twotreeref}
%Lemma 
%\ref{lem1} \ref{lem2} \ref{lem3} \ref{lem4} \ref{lem5} \ref{lem6} \ref{lem7} \ref{lem8}
%Theorem  \ref{thm1} \ref{thm2} \ref{thm3}
%Figure \ref{fig1} \ref{fig2} \ref{fig3} \ref{fig4}
\end{document}